\title{Trapezoidal Diagrams, Upward Triangulations, and Prime Catalan Numbers}
\author{Manuel Wettstein\footnote{Department of Computer Science, ETH Z\"urich, Switzerland. E-mail: \url{mw@inf.ethz.ch}}}
\theoremstyle{plain}
\newtheorem{theorem}{Theorem}
\newtheorem{lemma}[theorem]{Lemma}
\newtheorem{corollary}[theorem]{Corollary}
\newtheorem{observation}[theorem]{Observation}
\newtheorem{definition}[theorem]{Definition}
\newcommand{\pmdiags}{\mathfrak{D}_n^{\textnormal{\textsf{pm}}}}
\newcommand{\trdiags}{\mathfrak{D}_n^{\textnormal{\textsf{tr}}}}
\DeclareMathSymbol{\mlq}{\mathord}{operators}{``}
\DeclareMathSymbol{\mrq}{\mathord}{operators}{`'}
\newcommand{\downmapsto}{
  \rotatebox[origin=c]{-90}{$\scriptstyle\mapsto$}\mkern2mu
}
\newcommand{\fixwidth}[3]{\makebox[\widthof{#2}][#3]{#1}}
\newcommand{\hls}[1] {
    \tikz[baseline,
      outer sep=+15pt, inner sep = 2pt%
    ]
   \node[decorate,rectangle,minimum height=14pt,fill=black!16!white,anchor=text]{#1};
}
\tikzstyle{point} = [
\tikzstyle{node} = [
\tikzstyle{inactivenode} = [
\tikzstyle{trap} = [
\tikzstyle{trapedge} = [
\newcommand{\trapinit}[3] {
  \coordinate (bottom) at (0,#2);
  \coordinate (roof) at (0,#1);
  \draw[name path=floor,opacity=0] (bottom) -- ++(#3,0);
  \draw[name path=ceil,opacity=0] (roof) -- ++(#3,0);
}
\newcommand{\trapbord}[3] {
  \draw[name path=upper,opacity=0] (#1) -- (#1 |- roof);
  \draw[name path=lower,opacity=0] (#1) -- (#1 |- bottom);
  \draw[trap,name intersections={of=#2 and upper}] (intersection-1) -- (#1);
  \draw[trap,name intersections={of=#3 and lower}] (intersection-1) -- (#1);
}
\newcommand{\trapbordlow}[2] {
  \draw[name path=lower,opacity=0] (#1) -- (#1 |- bottom);
  \draw[trap,name intersections={of=#2 and lower}] (intersection-1) -- (#1);
}
\newcommand{\trapbordhigh}[2] {
  \draw[name path=upper,opacity=0] (#1) -- (#1 |- roof);
  \draw[trap,name intersections={of=#2 and upper}] (intersection-1) -- (#1);
}
\newcommand{\trapedge}[3] {
  \draw[trapedge,name path global=#3] (#1) -- (#2);
}
\newcommand{\trapedgelow}[3] {
  \draw[trapedge,name path global=#3] (#1) .. controls ++(1,-3) and ++(-1,-3) .. (#2);
}
\newcommand{\trapedgehigh}[3] {
  \draw[trapedge,name path global=#3] (#1) .. controls ++(1,3) and ++(-1,3) .. (#2);
}
\newcommand{\trapedgelows}[3] {
  \draw[trapedge,name path global=#3] (#1) .. controls ++(1,-1.5) and ++(-1,-1.5) .. (#2);
}
\newcommand{\trapedgehighs}[3] {
  \draw[trapedge,name path global=#3] (#1) .. controls ++(1,1.5) and ++(-1,1.5) .. (#2);
}
\newcommand{\trapedgelowss}[3] {
  \draw[trapedge,name path global=#3] (#1) .. controls ++(0.75,-0.75) and ++(-0.75,-0.75) .. (#2);
}
\newcommand{\trapedgehighss}[3] {
  \draw[trapedge,name path global=#3] (#1) .. controls ++(0.75,0.75) and ++(-0.75,0.75) .. (#2);
}
\newcommand{\trapedgelowsss}[3] {
  \draw[trapedge,name path global=#3] (#1) .. controls ++(0.5,-0.5) and ++(-0.5,-0.5) .. (#2);
}
\begin{document}


\maketitle


\begin{abstract}

The $d$-dimensional Catalan numbers form a well-known sequence of numbers which count balanced bracket expressions over an alphabet of size $d$.
In this paper, we introduce and study what we call $d$-dimensional \emph{prime Catalan numbers}, a sequence of numbers which count only a very specific subset of indecomposable balanced bracket expressions.

We further introduce the notion of a \emph{trapezoidal diagram} of a crossing-free geometric graph, such as a triangulation or a crossing-free perfect matching.
In essence, such a diagram is obtained by augmenting the geometric graph in question with its trapezoidal decomposition, and then forgetting about the precise coordinates of individual vertices while preserving the vertical visibility relations between vertices and segments.
We note that trapezoidal diagrams of triangulations are closely related to abstract upward triangulations.

We study the numbers of such diagrams in the cases of (i) perfect matchings and (ii) triangulations.
We give bijective proofs which establish relations with 3-dimensional (prime) Catalan numbers.
This allows us to determine the corresponding exponential growth rates exactly as (i) $5.196^n$ and (ii) $23.459^n$ (bases are rounded to 3 decimal places).

Finally, we give lower bounds for the maximum number of embeddings of a trapezoidal diagram on any given point set.

\end{abstract}
\section{Introduction}

\paragraph{Prime Catalan Numbers.}
A \emph{balanced bracket expression (of dimension d)} is a finite string $c$ over an alphabet $\{b_1,\dots,b_d\}$ of $d$ \emph{brackets} such that all brackets occur in equal numbers in $c$, and such that every prefix of $c$ contains at least as many occurrences of $b_i$ as of $b_{i+1}$, for $1\leq i \leq d-1$.
The \emph{size} $m = |c|$ of $c$ is defined as the number of occurrences of $b_1$.
As an example, we enumerate all balanced bracket expressions of dimension $d=3$ (with brackets $b_1 = \mlq\langle\mrq$, $b_2 = \mlq|\mrq$, and $b_3 = \mlq\rangle\mrq$) and of size $m=2$.
\begin{align*}
  \langle|\langle\rangle|\rangle &&
  \langle\langle||\rangle\rangle &&
  \langle|\langle|\rangle\rangle &&
  \langle\langle|\rangle|\rangle &&
  \langle|\rangle\langle|\rangle
\end{align*}

We call a balanced bracket expression of dimension $d$ \emph{prime} if it does not contain any non-empty, contiguous and proper substrings that are themselves balanced bracket expressions (of dimension d).
Note that in the above enumeration, only the first two expressions are prime, whereas the other three all contain ``$\langle|\rangle$'' as a proper substring.

The $m$-th $d$-dimensional Catalan number $C^{(d)}_m$ can be defined as the number of balanced bracket expressions of dimension $d$ and of size $m$ \cite{GP13,S04}.
The most prominent instantiation of this family of sequences is of course given by the customary (2-dimensional) Catalan numbers $C_m = C^{(2)}_m$.
These numbers are ubiquitous in enumerative combinatorics, as illustrated by a famous exercise in Stanley's book with no less than 66 different combinatorial interpretations \cite[Exercise 6.19]{S01}.
In this paper, however, the focus will be on the 3-dimensional case \cite{L11}.
For easy reference, below we enumerate the first ten entries in the sequence corresponding to $C^{(3)}_m$, starting with $m=0$.
\begin{equation*}
  1, 1, 5, 42, 462, 6006, 87516, 1385670, 23371634, 414315330,\dots
\end{equation*}

Explicit product formulae for the numbers $C^{(d)}_m$ are known and can be obtained by employing the famous hook-length formula for standard Young tableaux of shape $(m^d)$ (see \cite{NPS97} for a precise statement and an insightful proof of the hook-length formula).
For example, for dimension $d=3$ we have
\begin{equation}\label{eq:catalan_three_exact}
  C^{(3)}_m = \frac{2(3m)!}{(m+2)!(m+1)!m!} \text{.}
\end{equation}

By employing Stirling's approximation for factorials, we further get the asymptotic estimate
\begin{equation}\label{eq:catalan_three_asympt}
  C^{(3)}_m \sim \frac{\sqrt{3}}{\pi} m^{-4} 27^m \text{ \; (as $m \rightarrow \infty$).}
\end{equation}

In a similar vein, we denote by $P^{(d)}_m$ the number of balanced bracket expressions of dimension $d$ and of size $m$ which are prime.
We call $P^{(d)}_m$ the $m$-th $d$-dimensional \emph{prime Catalan number}. 
In spite of the natural definition, we are not aware of any previous work that studies these numbers or acknowledges their existence.
Below we enumerate the first ten entries in the sequence corresponding to $P^{(3)}_m$, starting again with $m=0$.
\begin{equation*}
1, 1, 2, 12, 107, 1178, 14805, 203885, 3002973, 46573347, \dots
\end{equation*}

The above sequence stands in stark contrast to the customary 2-dimensional case.
Indeed, by reusing the brackets $b_1 = \mlq\langle\mrq$ and $b_2 = \mlq\rangle\mrq$, we easily see that the empty string $\varepsilon$ and ``$\langle\rangle$'' are the only prime balanced bracket expressions of dimension $d=2$.
Hence, $P^{(2)}_m = 0$ for $m \geq 2$.

In Section \ref{sec:primecatalan} we study various aspects of the numbers $P^{(d)}_m$ and we show how to compute them.
The theorem below gives the rate of exponential growth of the prime Catalan numbers, and it is the only result that is relevant for the earlier sections.
In the following, let $C^{(d)}(x) := \sum_{m=0}^\infty C^{(d)}_mx^m$ denote the ordinary generating function of the $d$-dimensional Catalan numbers.

\begin{theorem}
  \label{thm:primeasym}
  For any dimension $d \geq 3$, the prime Catalan numbers satisfy
  \begin{equation*}
    \lim_{m \rightarrow \infty} \sqrt[m]{P^{(d)}_m} = \gamma_d\text{, } \; \text{ where } \gamma_d := \left(\frac{d}{C^{(d)}(1/d^d)}\right)^d\text{.} 
  \end{equation*}
\end{theorem}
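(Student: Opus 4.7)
The plan is to reduce the asymptotics of $P^{(d)}_m$ to those of $C^{(d)}_m$ via a bijective functional equation. I would first establish the identity
\[
C^{(d)}(x) \;=\; P^{(d)}\!\bigl(x\,C^{(d)}(x)^d\bigr),
\]
which encodes the following decomposition: every balanced bracket expression $c$ is obtained in a unique way from a prime kernel $p$ of some size $m$ by inserting balanced bracket expressions $c_1,\dots,c_{dm}$ immediately after each of the $dm$ letters of $p$. The forward map is transparent---the output is balanced of size $m + \sum_i |c_i|$; uniqueness requires showing that the positions of the kernel letters inside $c$ are forced by the twin requirements that the letters read at those positions form a prime and that the material between consecutive kernel positions is a balanced substring of $c$.

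Set $\phi(x) := x\,C^{(d)}(x)^d$. Combining the hook-length formula with Stirling's approximation generalizes \eqref{eq:catalan_three_asympt} to $C^{(d)}_m \sim \kappa_d\,m^{-(d^2-1)/2}\,d^{dm}$, so $C^{(d)}$ has radius of convergence $1/d^d$ and, because $(d^2-1)/2 > 1$ for $d \geq 3$, the value $C^{(d)}(1/d^d)$ is finite. Hence $\phi$ extends continuously to a strictly increasing bijection $[0,1/d^d] \to [0,R]$ with $R := \bigl(C^{(d)}(1/d^d)/d\bigr)^d = 1/\gamma_d$, and substituting $x = 1/d^d$ in the functional equation gives $P^{(d)}(R) = C^{(d)}(1/d^d) < \infty$. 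By non-negativity of the coefficients this immediately yields $\limsup_m \sqrt[m]{P^{(d)}_m} \leq \gamma_d$. For the matching lower bound I would transfer the singularity of $C^{(d)}$ at $x = 1/d^d$ through $\phi^{-1}$: the condition $d \geq 3$ also guarantees that $(C^{(d)})'(1/d^d)$, and therefore $\phi'(1/d^d)$, is finite and positive, so $\phi^{-1}$ exists locally near $y = R$ and acquires the same type of singular expansion as $C^{(d)}$ itself. Composing, $P^{(d)}(y) = C^{(d)}(\phi^{-1}(y))$ inherits a singularity at $y = R$ and the standard transfer theorem produces $P^{(d)}_m \sim c_d\,m^{-(d^2-1)/2}\,R^{-m}$, implying $\lim_m \sqrt[m]{P^{(d)}_m} = \gamma_d$.

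I expect the main obstacle to be the uniqueness half of the bijective decomposition. A naive greedy rule---always peeling off the leftmost, shortest balanced substring---already fails on $\langle|\langle|\rangle\rangle$: the correct decomposition has kernel $\langle|\rangle$ at positions $1,2,6$ with $c_2 = \langle|\rangle$, whereas greedy would instead terminate the kernel at position $3$. Establishing uniqueness requires simultaneously enforcing the prefix condition for balanced bracket expressions in each gap and primality of the kernel subsequence, and is the combinatorial heart of the argument.
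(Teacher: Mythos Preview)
Your functional equation $C(x)=P(xC(x)^d)$ and the upper bound $\limsup_m\sqrt[m]{P_m}\le\gamma_d$ coincide with the paper's Lemma~\ref{lem:relation} and the first half of its proof of Theorem~\ref{thm:primeasym}. The paper diverges from you on the lower bound and on existence of the limit: rather than invoking singular inversion and transfer, it shows by an explicit chain-rule computation that the $k$-th derivative of $P$ blows up at $y=1/\gamma_d$ for $k=\lceil(d^2-3)/2\rceil$ (using that $C^{(k)}(1/d^d)$ diverges while all lower derivatives remain finite), which fixes the radius of convergence; then, separately, it proves super-multiplicativity $P_{m+n}\ge P_mP_n$ for $d\ge 3$ (splicing primes $p_m'\rangle$ and $\langle p_n'$ into $p_m'\langle\rangle p_n'$) and invokes Fekete's lemma for the limit. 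Your route would in fact buy more---the full asymptotic $P_m\sim c_d\,m^{-(d^2-1)/2}\gamma_d^m$, which the paper's numerical tables suggest but explicitly leave open---at the cost of verifying $\Delta$-analyticity of $C^{(d)}$ and making the singular inversion of $\phi$ rigorous, steps you have only sketched.

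On your anticipated obstacle: uniqueness of the decomposition is immediate once you define the $c_i$ to be the \emph{inclusion-maximal} contiguous balanced substrings of $c$ that begin strictly after the first letter. Two such substrings can neither overlap nor abut (their union would be a larger balanced substring, contradicting maximality), so removing them leaves a string with no nonempty proper balanced substring---prime by definition. In your example $\langle|\langle|\rangle\rangle$ the unique such maximal substring is $\langle|\rangle$ at positions $3$--$5$, and the kernel $\langle|\rangle$ falls out with no search required.
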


Decimal approximations of the numbers $\gamma_d$ can be computed automatically using any modern computer algebra system.
For $\gamma_3$ we even obtain a closed formula.
\begin{align}\label{eq:primegamma}
  \gamma_3 = \frac{27}{(\frac{729\sqrt{3}}{40\pi} - 9)^3} \approx 23.45948 && \gamma_4 \approx 251.78874 && \gamma_5 \approx 3119.93434
\end{align}

\paragraph{Trapezoidal Diagrams.}

Let $S$ be a set of $n$ points in the plane such that no three points are on a common line.
Additionally, we assume throughout that all points have distinct $x$-coordinates, which means that they can be ordered as $s_1,\dots,s_n$ from left to right.
We then say that a point $s_i$ is \emph{to the left} of another point $s_j$ if $i < j$ holds.

A \emph{crossing-free geometric graph} or \emph{plane graph} \emph{(on $S$)} is a graph $G$ with vertex set $S$ such that any two edges, which are drawn as straight-line segments between the corresponding endpoints, do not intersect except possibly at a common endpoint.
In this paper we restrict our attention to two special cases of crossing-free geometric graphs, namely \emph{perfect matchings} (1-regular crossing-free geometric graphs) and \emph{triangulations} (edge-maximal crossing-free geometric graphs). 
Even though we tend to omit the adjective \emph{crossing-free}, all geometric graphs considered in this paper have no crossings.

The \emph{trapezoidal decomposition} of a geometric graph $G$ is a well-known and useful notion (see \cite{Seidel10} for a classic application) which is obtained by drawing a vertical \emph{extension} upwards and downwards outgoing from each point in $S$ until a segment of $G$ is hit; if there is no obstruction, then the extension is drawn as an infinite ray.
If the extension going upwards (downwards) from a point $s$ hits the segment corresponding to an edge $e$, then we say that $e$ \emph{sees $s$ below (above)} in $G$.
Clearly, every point can be seen by at most two edges, once below and once above.

We now define the \emph{trapezoidal diagram} (or, just \emph{diagram}) of $G$, where $G$ is either a perfect matching or a triangulation.
Informally speaking, the trapezoidal diagram is equivalent to the trapezoidal decomposition except that we discard the coordinates of the vertices.

\begin{definition}\label{def:diagpm}
  Let $n$ be even, and let $M$ be a perfect matching on $S$.
  Then, the \emph{trapezoidal diagram} of $M$, denoted by $\mathcal{D}_M$, is defined as follows.
  \begin{itemize}
    \item
      $\mathcal{D}_M$ is an abstract graph with vertex set  $[n]$\footnote{We make use of the convenient notation $[n] = \{1,2,\dots,n\}$.} so that there is an edge $\{i,j\}$ in $\mathcal{D}_M$ if and only if there is an edge $\{s_i,s_j\}$ in $M$.
    \item
      Every edge in $\mathcal{D}_M$ has two distinguished sequences with the indices $i_1,\dots,i_k$ of the points $s_{i_1},\dots,s_{i_k}$ (sorted from left to right) that the corresponding edge in $M$ can see below and above, respectively.
    \item
      There are two additional sequences with the indices of the points (sorted from left to right) that no edge in $M$ can see below and above, respectively.
  \end{itemize}
  If there exists an isomorphism $[n] \rightarrow [n]$ between $\mathcal{D}_{M_1}$ and $\mathcal{D}_{M_2}$ that preserves the structure imposed by the above bullet points, then we identify $\mathcal{D}_{M_1}$ and $\mathcal{D}_{M_2}$, and we say that $M_1$ and $M_2$ have \emph{the same trapezoidal diagram}.
\end{definition}

We will typically not appeal to the above definition directly.
Instead, we will argue on the basis of a drawing of a trapezoidal diagram.
Incidentally, a \emph{drawing} of $\mathcal{D}_M$ is a plane (i.e., without crossings) drawing of the underlying graph and its trapezoidal decomposition, where we allow edges of the graph to be drawn as arbitrary $x$-monotone Jordan curves.
All orientations of edges must however remain the same (i.e., a left endpoint remains a left endpoint in the drawing) and all vertical visibility relations must remain identical (i.e., the order from left to right in which an edge sees points below does not change, and so on).

We refer to Figure \ref{fig:diagpm} for interesting examples.
Observe that the geometric graph $M$ combined with its trapezoidal decomposition is an instance of a drawing of $\mathcal{D}_M$.
Further note that two distinct perfect matchings $M_1$ and $M_2$ on the same point set $S$ may have the same trapezoidal diagram.

We remark at once that trapezoidal diagrams of perfect matchings are related to the well studied notion of \emph{(directed) bar visibility graphs} (see for example \cite{TT86} for a definition and further references).
However, that class of graphs imposes much coarser equivalence classes on the set of collections of non-intersecting segments in the plane.

\begin{figure}[t]
  \begin{center}
      \begin{tikzpicture}[scale=0.6,xscale=1.15]
    \begin{scope}[xshift=-40]
        \coordinate (c1) at (0.0,1);
        \coordinate (c2) at (2.2,0);
        \coordinate (c3) at (1.7,1.3);
        \coordinate (c4) at (3.5,1.7);
        \coordinate (c5) at (1.1,3);
        \coordinate (c6) at (2.8,2.5);
        \trapinit{3.35}{-0.35}{4}
        \trapedge{c1}{c2}{e1}
        \trapedge{c3}{c4}{e2}
        \trapedge{c5}{c6}{e3}
        \trapedge{c3}{c4}{e2}
          \trapbord{c1}{ceil}{floor};
          \trapbord{c2}{e2}{floor};
          \trapbord{c3}{e3}{e1};
          \trapbord{c4}{ceil}{floor};
          \trapbord{c5}{ceil}{e1};
          \trapbord{c6}{ceil}{e2};
        \node[point] at (c1) {};
        \node[point] at (c2) {};
        \node[point] at (c3) {};
        \node[point] at (c4) {};
        \node[point] at (c5) {};
        \node[point] at (c6) {};
     \end{scope}
      
     \begin{scope}[xshift=-40,yshift=-135]
        \coordinate (c1) at (0,0.5);
        \coordinate (c2) at (2.8,0.5);
        \coordinate (c3) at (1.4,1.5);
        \coordinate (c4) at (3.5,1.5);
        \coordinate (c5) at (0.7,2.5);
        \coordinate (c6) at (2.1,2.5);
        \trapinit{3.35}{-0.35}{4}
        \trapedge{c1}{c2}{e1}
        \trapedge{c3}{c4}{e2}
        \trapedge{c5}{c6}{e3}
          \trapbord{c1}{ceil}{floor};
          \trapbord{c2}{e2}{floor};
          \trapbord{c3}{e3}{e1};
          \trapbord{c4}{ceil}{floor};
          \trapbord{c5}{ceil}{e1};
          \trapbord{c6}{ceil}{e2};
        \node[node] at (c1) {};
        \node[node] at (c2) {};
        \node[node] at (c3) {};
        \node[node] at (c4) {};
        \node[node] at (c5) {};
        \node[node] at (c6) {};
      \end{scope}
      
    \begin{scope}[xshift=100]
        \coordinate (c1) at (0.0,1);
        \coordinate (c2) at (2.2,0);
        \coordinate (c3) at (1.7,1.3);
        \coordinate (c4) at (3.5,1.7);
        \coordinate (c5) at (1.1,3);
        \coordinate (c6) at (2.8,2.5);
        \trapinit{3.35}{-0.35}{4}
        \trapedge{c1}{c2}{e1}
        \trapedge{c3}{c4}{e2}
        \trapedge{c5}{c6}{e3}
          \trapbord{c1}{ceil}{floor};
          \trapbord{c2}{e2}{floor};
          \trapbord{c3}{e3}{e1};
          \trapbord{c4}{ceil}{floor};
          \trapbord{c5}{ceil}{e1};
          \trapbord{c6}{ceil}{e2};
        \node[node] at (c1) {};
        \node[node] at (c2) {};
        \node[node] at (c3) {};
        \node[node] at (c4) {};
        \node[node] at (c5) {};
        \node[node] at (c6) {};
     \end{scope}
     
     \begin{scope}[xshift=100,yshift=-135]
        \coordinate (c1) at (0.0,0.75);
        \coordinate (c2) at (2.8,0.75);
        \coordinate (c3) at (1.4,1.25);
        \coordinate (c4) at (3.5,1.25);
        \coordinate (c5) at (0.7,2.5);
        \coordinate (c6) at (2.1,2.5);
        \trapinit{3.35}{-0.35}{4}
        \trapedgelowss{c1}{c2}{e1}
        \trapedgehighss{c3}{c4}{e2}
        \trapedgelowsss{c5}{c6}{e3}
          \trapbord{c1}{ceil}{floor};
          \trapbord{c2}{e2}{floor};
          \trapbord{c3}{e3}{e1};
          \trapbord{c4}{ceil}{floor};
          \trapbord{c5}{ceil}{e1};
          \trapbord{c6}{ceil}{e2};
        \node[node] at (c1) {};
        \node[node] at (c2) {};
        \node[node] at (c3) {};
        \node[node] at (c4) {};
        \node[node] at (c5) {};
        \node[node] at (c6) {};
      \end{scope}
  
          \begin{scope}[xshift=300]
            \coordinate (c3)  at (1.1,0);
            \coordinate (c2)  at (2.9,0);
            \coordinate (c6)  at (1.4,0.75);
            \coordinate (c1)  at (2.6,0.75);
            \coordinate (c5)  at (0,1.5);
            \coordinate (c9)  at (0.8,1.5);
            \coordinate (c4)  at (3.2,1.5); 
            \coordinate (c8)  at (4,1.5);
            \coordinate (c12)  at (1.4,2.25);
            \coordinate (c7) at (2.6,2.25);
            \coordinate (c11) at (1.1,3);
            \coordinate (c10) at (2.9,3);
            \trapinit{3.35}{-0.35}{4}
            \trapedge{c1}{c2}{e1}
            \trapedge{c3}{c4}{e2}
            \trapedge{c5}{c6}{e3}
            \trapedge{c7}{c8}{e4}
            \trapedge{c9}{c10}{e5}
            \trapedge{c11}{c12}{e6}
            \trapbord{c1}{e2}{floor};
            \trapbord{c2}{e2}{floor};
            \trapbord{c3}{e3}{floor};
            \trapbord{c4}{e4}{floor};
            \trapbord{c5}{ceil}{floor};
            \trapbord{c6}{e5}{e2};
            \trapbord{c7}{e5}{e2};
            \trapbord{c8}{ceil}{floor};
            \trapbord{c9}{ceil}{e3};
            \trapbord{c10}{ceil}{e4};
            \trapbord{c11}{ceil}{e5};
            \trapbord{c12}{ceil}{e5};
            \node[point] at (c1) {};
            \node[point] at (c2) {};
            \node[point] at (c3) {};
            \node[point] at (c4) {};
            \node[point] at (c5) {};
            \node[point] at (c6) {};
            \node[point] at (c7) {};
            \node[point] at (c8) {};
            \node[point] at (c9) {};
            \node[point] at (c10) {};
            \node[point] at (c11) {};
            \node[point] at (c12) {};
          \end{scope}
          
          \begin{scope}[xshift=300,yshift=-135]
            \coordinate (c1)  at (1.1,0);
            \coordinate (c4)  at (2.9,0);
            \coordinate (c2)  at (1.4,0.75);
            \coordinate (c7)  at (2.6,0.75);
            \coordinate (c5)  at (0,1.5);
            \coordinate (c3)  at (0.8,1.5);
            \coordinate (c10)  at (3.2,1.5); 
            \coordinate (c8)  at (4,1.5);
            \coordinate (c6)  at (1.4,2.25);
            \coordinate (c11) at (2.6,2.25);
            \coordinate (c9) at (1.1,3);
            \coordinate (c12) at (2.9,3);
            \trapinit{3.35}{-0.35}{4}
            \trapedge{c1}{c2}{e1}
            \trapedge{c3}{c4}{e2}
            \trapedge{c5}{c6}{e3}
            \trapedge{c7}{c8}{e4}
            \trapedge{c9}{c10}{e5}
            \trapedge{c11}{c12}{e6}
            \trapbord{c1}{e2}{floor};
            \trapbord{c2}{e2}{floor};
            \trapbord{c3}{e3}{floor};
            \trapbord{c4}{e4}{floor};
            \trapbord{c5}{ceil}{floor};
            \trapbord{c6}{e5}{e2};
            \trapbord{c7}{e5}{e2};
            \trapbord{c8}{ceil}{floor};
            \trapbord{c9}{ceil}{e3};
            \trapbord{c10}{ceil}{e4};
            \trapbord{c11}{ceil}{e5};
            \trapbord{c12}{ceil}{e5};
            \node[point] at (c1) {};
            \node[point] at (c2) {};
            \node[point] at (c3) {};
            \node[point] at (c4) {};
            \node[point] at (c5) {};
            \node[point] at (c6) {};
            \node[point] at (c7) {};
            \node[point] at (c8) {};
            \node[point] at (c9) {};
            \node[point] at (c10) {};
            \node[point] at (c11) {};
            \node[point] at (c12) {};
          \end{scope}
  \end{tikzpicture}
  \end{center}
  \caption{On the left, a perfect matching with trapezoidal decomposition and three drawings of its trapezoidal diagram, which are distinguished visually from geometric graphs by leaving vertices blank. On the right, two perfect matchings on the same point set with the same trapezoidal diagram.}
  \label{fig:diagpm}
\end{figure}
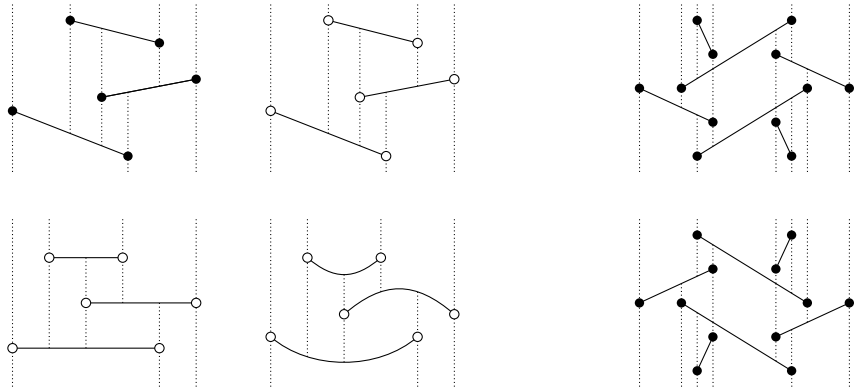

\begin{definition}\label{def:diagtr}
  Let $n \geq 2$ and assume (unless $n=2$) that $S$ has a triangular convex hull with the edge $\{s_1,s_n\}$ forming the lower envelope.
Then, the \emph{trapezoidal diagram} of a triangulation $T$ on $S$, denoted by $\mathcal{D}_T$, is obtained in the following way.
If $n \neq 2$, we first draw an additional edge between $s_1$ and $s_n$ as an $x$-monotone curve that goes above all other points and segments.
After that, $\mathcal{D}_T$ is defined analogously to Definition \ref{def:diagpm}.
\end{definition}

Inserting an additional edge between the left-most and right-most point might seem rather arbitrary.
However, apart from obtaining much nicer \emph{drawings}, this is done for a natural reason that will become clear later.

Here, a closely related concept from graph drawing is that of \emph{upward triangulations}.
These are abstract maximal planar graphs with directed edges such that there exists a plane embedding where all edges are drawn as $y$-monotone Jordan curves pointing upwards \cite{FGW12}.
After replacing \emph{upward} with \emph{rightward} (and $y$-monotone with $x$-monotone), and after looking at Figure \ref{fig:diagtr}, it becomes clear that every trapezoidal diagram of a geometric triangulation corresponds to a unique upward triangulation (by orienting edges from left to right) and that, depending on the presence or absence of symmetries, every upward triangulation corresponds to either one or two such diagrams.

\begin{figure}
  \begin{center}
       \begin{tikzpicture}[xscale=1.18]
      \begin{scope}[xshift=-10,yshift=-3,scale=0.5,yscale=1.5,scale=1.2,
          decoration={markings,mark=at position 0.6 with {\arrow{>}}}]
        \coordinate (c1) at (-1,0);
        \coordinate (c2) at (4,0);
        \coordinate (c3) at (0.5,0.66);
        \coordinate (c4) at (2.5,0.5);
        \coordinate (c5) at (1.5,2);
        \draw[postaction={decorate}] (c1) -- (c2);
        \draw[postaction={decorate}] (c1) -- (c3);
        \draw[postaction={decorate}] (c1) -- (c4);
        \draw[postaction={decorate}] (c3) -- (c4);
        \draw[postaction={decorate}] (c4) -- (c2);
        \draw[postaction={decorate}] (c1) -- (c5);
        \draw[postaction={decorate}] (c3) -- (c5);
        \draw[postaction={decorate}] (c5) -- (c4);
        \draw[postaction={decorate}] (c5) -- (c2);
        \node[point] at (c1) {};
        \node[point] at (c2) {};
        \node[point] at (c3) {};
        \node[point] at (c4) {};
        \node[point] at (c5) {};
      \end{scope}
      \begin{scope}[xshift=180,scale=0.4,yshift=60]
        \coordinate (c1) at (0,0);
        \coordinate (c2) at (5,0);
        \coordinate (c3) at (2.5,-1.25);
        \coordinate (c4) at (3.25,1.25);
        \coordinate (c5) at (1.75,0);
        \trapinit{2.5}{-2.5}{5}
        \trapedgelow{c1}{c2}{e1}
        \trapedgehigh{c1}{c2}{e10}
        \trapedge{c1}{c3}{e2}
        \trapedge{c3}{c2}{e3}
        \trapedge{c3}{c4}{e4}
        \trapedge{c1}{c5}{e5}
        \trapedge{c3}{c5}{e6}
        \trapedge{c4}{c5}{e7}
        \trapedge{c1}{c4}{e8}
        \trapedge{c4}{c2}{e9}
        \trapbord{c1}{ceil}{floor}
        \trapbord{c2}{ceil}{floor}
        \trapbord{c3}{e7}{e1}
        \trapbord{c4}{e10}{e3}
        \trapbord{c5}{e8}{e2}
        \node[node] at (c1) {};
        \node[node] at (c2) {};
        \node[node] at (c3) {};
        \node[node] at (c4) {};
        \node[node] at (c5) {};
      \end{scope}
      \begin{scope}[xshift=100,scale=0.4,yshift=60]
        \coordinate (c1) at (0,0);
        \coordinate (c2) at (5,0);
        \coordinate (c3) at (3.25,-1.25);
        \coordinate (c4) at (2.5,1.25);
        \coordinate (c5) at (1.75,0);
        \trapinit{2.5}{-2.5}{5}
        \trapedgelow{c1}{c2}{e1}
        \trapedgehigh{c1}{c2}{e10}
        \trapedge{c1}{c3}{e2}
        \trapedge{c3}{c2}{e3}
        \trapedge{c3}{c4}{e4}
        \trapedge{c1}{c5}{e5}
        \trapedge{c3}{c5}{e6}
        \trapedge{c4}{c5}{e7}
        \trapedge{c1}{c4}{e8}
        \trapedge{c4}{c2}{e9}
        \trapbord{c3}{e9}{e1}
        \trapbord{c4}{e10}{e6}
        \trapbord{c5}{e8}{e2}
        \trapbord{c1}{ceil}{floor}
        \trapbord{c2}{ceil}{floor}
        \node[node] at (c1) {};
        \node[node] at (c2) {};
        \node[node] at (c3) {};
        \node[node] at (c4) {};
        \node[node] at (c5) {};
      \end{scope}
    \end{tikzpicture}
  \end{center}
  \caption{An (abstract) upward (or rather, rightward) triangulation and two corresponding trapezoidal diagrams of (geometric) triangulations. The second diagram can be obtained by a vertical reflection.}
  \label{fig:diagtr}
\end{figure}
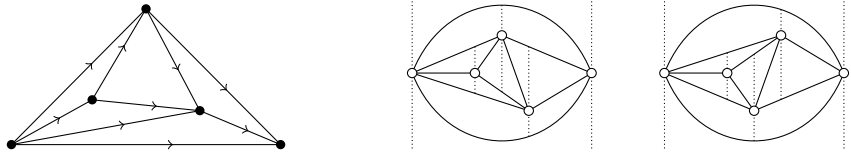

In Section \ref{sec:bijections} we present a generic method for encoding trapezoidal diagrams as strings over a finite alphabet.
The presented ideas can be applied to any family of geometric graphs.
However, we obtain a simple characterization of the set of code words only in the cases of perfect matchings and triangulations.

Let $\pmdiags = \{\mathcal{D}_M\}$ be the set of all trapezoidal diagrams of all perfect matchings on all point sets $S$ of size $n$.
See Table \ref{tab:diagpmenum} for an enumeration of this set for $n = 0,2,4$.

\begin{table}[b]
  \begin{center}
      \begin{tabular}{*2{>{\centering\arraybackslash}m{1.3cm}} >{\centering\arraybackslash}m{8.2cm}}
    \toprule
      $n=0$ & $n=2$ & $n=4$ \\
      \cmidrule(lr){1-1}\cmidrule(lr){2-2}\cmidrule(lr){3-3}
      \addlinespace[0.5em]
      \begin{tikzpicture}[scale=0.5]
        \node {$\varnothing$};
      \end{tikzpicture}
      &
      \begin{tikzpicture}[scale=0.5]
        \coordinate (c1) at (0.35,0.35);
        \coordinate (c2) at (1.75,0.35);
        \trapinit{1.05}{-0.35}{2.1}
        \trapedge{c1}{c2}{e1}
        \trapbord{c1}{ceil}{floor};
        \trapbord{c2}{ceil}{floor};
        \node[node] at (c1) {};
        \node[node] at (c2) {};
      \end{tikzpicture}
      &
      \begin{tikzpicture}[scale=0.5]
      \begin{scope}[xshift=-190]
        \coordinate (c1) at (0,0);
        \coordinate (c2) at (1.4,0);
        \coordinate (c3) at (0.7,0.7);
        \coordinate (c4) at (2.1,0.7);
        \trapinit{1.05}{-0.35}{2.1}
        \trapedge{c1}{c2}{e1}
        \trapedge{c3}{c4}{e2}
        \trapbord{c1}{ceil}{floor};
        \trapbord{c2}{e2}{floor};
        \trapbord{c3}{ceil}{e1};
        \trapbord{c4}{ceil}{floor};
        \node[node] at (c1) {};
        \node[node] at (c2) {};
        \node[node] at (c3) {};
        \node[node] at (c4) {};
      \end{scope}
      \begin{scope}[xshift=-95]
        \coordinate (c1) at (0.7,0);
        \coordinate (c2) at (2.1,0);
        \coordinate (c3) at (0,0.7);
        \coordinate (c4) at (1.4,0.7);
        \trapinit{1.05}{-0.35}{2.1}
        \trapedge{c1}{c2}{e1}
        \trapedge{c3}{c4}{e2}
        \trapbord{c1}{e2}{floor};
        \trapbord{c2}{ceil}{floor};
        \trapbord{c3}{ceil}{floor};
        \trapbord{c4}{ceil}{e1};
        \node[node] at (c1) {};
        \node[node] at (c2) {};
        \node[node] at (c3) {};
        \node[node] at (c4) {};
      \end{scope}
      \begin{scope}[xshift=0]
        \coordinate (c1) at (0,0);
        \coordinate (c2) at (2.1,0);
        \coordinate (c3) at (0.7,0.7);
        \coordinate (c4) at (1.4,0.7);
        \trapinit{1.05}{-0.35}{2.1}
        \trapedge{c1}{c2}{e1}
        \trapedge{c3}{c4}{e2}
        \trapbord{c1}{ceil}{floor};
        \trapbord{c2}{ceil}{floor};
        \trapbord{c3}{ceil}{e1};
        \trapbord{c4}{ceil}{e1};
        \node[node] at (c1) {};
        \node[node] at (c2) {};
        \node[node] at (c3) {};
        \node[node] at (c4) {};
      \end{scope}
      \begin{scope}[xshift=95]
        \coordinate (c1) at (0.7,0);
        \coordinate (c2) at (1.4,0);
        \coordinate (c3) at (0,0.7);
        \coordinate (c4) at (2.1,0.7);
        \trapinit{1.05}{-0.35}{2.1}
        \trapedge{c1}{c2}{e1}
        \trapedge{c3}{c4}{e2}
        \trapbord{c1}{e2}{floor};
        \trapbord{c2}{e2}{floor};
        \trapbord{c3}{ceil}{floor};
        \trapbord{c4}{ceil}{floor};
        \node[node] at (c1) {};
        \node[node] at (c2) {};
        \node[node] at (c3) {};
        \node[node] at (c4) {};
      \end{scope}
      \begin{scope}[xshift=190]
        \coordinate (c1) at (0,0.35);
        \coordinate (c2) at (0.7,0.35);
        \coordinate (c3) at (1.4,0.35);
        \coordinate (c4) at (2.1,0.35);
        \trapinit{1.05}{-0.35}{2.1}
        \trapedge{c1}{c2}{e1}
        \trapedge{c3}{c4}{e2}
        \trapbord{c1}{ceil}{floor};
        \trapbord{c2}{ceil}{floor};
        \trapbord{c3}{ceil}{floor};
        \trapbord{c4}{ceil}{floor};
        \node[node] at (c1) {};
        \node[node] at (c2) {};
        \node[node] at (c3) {};
        \node[node] at (c4) {};
      \end{scope}
      \end{tikzpicture} \\
      \addlinespace[0.25em]
      \bottomrule
    \end{tabular}
  \end{center}
  \caption{All non-isomorphic trapezoidal diagrams of perfect matchings, with $\varnothing$ denoting the empty diagram.}
  \label{tab:diagpmenum}
\end{table}

\begin{theorem}\label{thm:pmdiags}
  For any $n = 2m$ we have that the number of trapezoidal diagrams of perfect matchings on $n$ points is equal to the $m$-th 3-dimensional Catalan number, i.e.,
  \begin{equation*}
    |\pmdiags| = C^{(3)}_m = \frac{2(3m)!}{(m+2)!(m+1)!m!}
    \sim \beta n^{-4} \alpha^n \text{ \; (as $n \rightarrow \infty$),}
  \end{equation*}
  where
  \begin{equation*}
    \alpha = \sqrt{27} \approx 5.19615\text{, } \; \text{ and }  \beta = \frac{16\sqrt{3}}{\pi} \approx 8.82126 \text{.}
  \end{equation*}
\end{theorem}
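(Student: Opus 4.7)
The plan is to prove $|\pmdiags| = C^{(3)}_m$ by exhibiting an explicit bijection between trapezoidal diagrams of perfect matchings on $2m$ points and 3-dimensional balanced bracket expressions of size $m$; the latter are counted by $C^{(3)}_m$ essentially by definition. Once the bijection is established, the middle equality is nothing but the product formula \eqref{eq:catalan_three_exact}, and the stated asymptotic follows directly from \eqref{eq:catalan_three_asympt} upon substituting $m = n/2$: the factor $27^m$ becomes $\alpha^n$ with $\alpha = \sqrt{27}$, and the factor $m^{-4}$ becomes $16\, n^{-4}$, producing the constant $\beta = 16\sqrt{3}/\pi$.

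To construct the bijection I would sweep the diagram from left to right and, at each of the $n$ vertices, emit a short block of symbols drawn from $\{\langle, |, \rangle\}$ that records whether the vertex is a left or a right endpoint of its matching edge, the position of that edge among those currently crossing the sweep line, and which of its two vertical extensions hit an existing edge rather than going to infinity. The encoding is designed so that the total emission contains exactly $m$ copies of each of the three bracket symbols, and so that the Dyck-like prefix inequalities $\#\langle \ge \#| \ge \#\rangle$ are equivalent to the combinatorial realizability of the sweep history. The inverse is the symmetric reconstruction: read the word left to right, recreate the sweep by inserting and removing edges from a stack, and fix visibility according to the symbols; the prefix conditions are exactly what is needed to guarantee that the reconstruction never fails and always produces a unique trapezoidal diagram.

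The main obstacle is designing the encoding so that the set of code words is \emph{exactly} the set of 3-dimensional balanced bracket expressions of size $m$, and not a strict superset. A naive encoding that merely tabulates the insertion and deletion positions at each event overcounts (one can check by hand that even for $m = 3$ the resulting raw count does not agree with $C^{(3)}_m = 42$), so the $|$-events must be placed with some care, and the identification of equivalent diagrams encoded in Definition~\ref{def:diagpm} must be respected. This is precisely what the generic encoding scheme of Section~\ref{sec:bijections} is engineered to do; with that device in hand the theorem is immediate, with the asymptotic part reducing to Stirling's approximation applied to \eqref{eq:catalan_three_exact}, exactly as in the derivation of \eqref{eq:catalan_three_asympt}.
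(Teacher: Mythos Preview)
Your high-level plan matches the paper's: establish a bijection with $3$-dimensional balanced bracket expressions of size $m$ (this is Lemma~\ref{lem:pmdiags}) and then cite \eqref{eq:catalan_three_exact} and \eqref{eq:catalan_three_asympt}; your asymptotic substitution $m=n/2$ is correct.

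The gap is in the bijection itself. You describe a left-to-right sweep over the $n$ \emph{vertices}, emitting a ``short block'' of brackets per vertex based on local data (left/right endpoint, rank on the sweep line, boundedness of the two extensions), then concede that this overcounts and defer the real construction to Section~\ref{sec:bijections}. But that section \emph{is} the content you are asked to supply; pointing to it is circular. More importantly, the paper's encoding is not a vertex sweep at all. It enumerates the \emph{trapezoids} of the decomposition in the canonical order of Observation~\ref{obs:staircase} and emits exactly one bracket per trapezoid: there are $3m$ of them (omitting the rightmost unbounded one), three per edge, of types (i)~right boundary through the left endpoint, (ii)~right boundary below the right endpoint, (iii)~right boundary above the right endpoint, mapped to $\langle$, $|$, $\rangle$ respectively. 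The canonical order is not left-to-right in $x$; it is a recursive rule in which a trapezoid becomes eligible only once its lower bounding edge is fully supported from below, and the staircase structure of Observation~\ref{obs:staircase} guarantees that the three trapezoids of each edge appear in the relative order (i), (ii), (iii), which is exactly the prefix condition for a balanced expression. The inverse direction is certified by two invariants, (I1) and (I2), counting active left and right endpoints on the current staircase boundary as $m_1-m_2$ and $m_2-m_3$. Your proposal contains none of this machinery, and the vertex-local data you list does not determine the canonical order, so as written it does not yield the claimed bijection.
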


Similarly, let $\trdiags = \{\mathcal{D}_T\}$ be the set of all trapezoidal diagrams of all triangulations on all sets $S$ of size $n$ as specified in Definition \ref{def:diagtr}.
Enumerations for the cases $n = 2,3,4,5$ can be seen in Table \ref{tab:diagtrenum}.

\begin{table}
  \begin{center}
    \input{tab/diagtrenum}
  \end{center}
  \caption{All non-isomorphic trapezoidal diagrams of triangulations. We omit drawing the vertical extensions through the left-most and right-most vertices.}
  \label{tab:diagtrenum}
\end{table}

\begin{theorem}\label{thm:trdiags}
  For any $n = m+2$ we have that the number of trapezoidal diagrams of triangulations on $n$ points is equal to the $m$-th 3-dimensional prime Catalan number, i.e.,\footnote{We make use of the $\Theta^\ast$-notation, which hides any unattributed subexponential factors.}
  \begin{equation*}
    |\trdiags| = P^{(3)}_m = \Theta^\ast(\gamma_3^n)\text{, } \; \text{ where } \gamma_3 = \frac{27}{(\frac{729\sqrt{3}}{40\pi} - 9)^3} \approx 23.45948\text{.}
  \end{equation*}
\end{theorem}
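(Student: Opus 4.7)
The plan is to establish a bijection between $\trdiags$ and the set of prime $3$-dimensional balanced bracket expressions of size $m := n-2$, and then to deduce the asymptotic estimate directly from Theorem \ref{thm:primeasym}; the discrepancy between $n$ and $m$ is absorbed by the $\Theta^\ast$ notation since $\gamma_3^m = \Theta(\gamma_3^n)$.

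To construct the bijection, I would reuse the generic encoding scheme announced for Section \ref{sec:bijections}, which already powers the proof of Theorem \ref{thm:pmdiags}. Applied to a triangulation diagram $\mathcal{D}_T$, this encoding sweeps vertices from left to right and emits at each vertex a deterministic pattern of characters from $\{b_1, b_2, b_3\}$ determined by its local visibility data (which edges open or close there, and the configuration of vertical extensions incident to the vertex). A character count based on Euler's formula for the augmented triangulation (which has $n$ vertices, $2n-2$ edges, and hence $m$ bounded triangular faces together with one additional ``cap'' face bounded above by the added top arc) should confirm that the resulting word is a balanced $3$-dimensional bracket expression of size $m$, with the required prefix inequalities enforced by the monotonicity of the sweep.

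The heart of the theorem is then to identify the image of this encoding as precisely the prime expressions. In one direction, I would argue by contrapositive: if the code word of $\mathcal{D}_T$ admitted a proper nonempty contiguous balanced substring, then the corresponding vertical strip could be isolated from the rest of the diagram by two vertical cuts that cross neither an edge nor a vertical extension. But the triangular convex hull of $S$ together with the added top arc between $s_1$ and $s_n$ fully encloses all intermediate points and edges, so every vertical line through the interior is crossed by at least one edge of the augmented triangulation, ruling out such a cut. This is also the geometric reason why introducing the top arc in Definition \ref{def:diagtr} is natural rather than arbitrary: it is exactly what converts the ``decomposable'' matching picture into the ``prime'' triangulation picture.

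The converse direction, which I expect to be the main technical obstacle, is to recursively reconstruct a triangulation diagram from any given prime expression. Primality forces the recursion to produce a single connected gadget at each level rather than a horizontal juxtaposition of independent sub-diagrams, so one must verify that this recursive assembly always yields a combinatorially valid trapezoidal diagram realizable by an actual triangulation with the required convex-hull shape. Verifying that all visibility constraints are simultaneously realizable --- in particular, that the nesting pattern enforced by the three bracket types can always be drawn as an $x$-monotone maximal plane graph bounded by a triangle --- is precisely what distinguishes the present argument from the comparatively permissive matching setting of Theorem \ref{thm:pmdiags}.
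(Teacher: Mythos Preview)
Your high-level plan---bijection with prime size-$m$ expressions, then invoke Theorem~\ref{thm:primeasym}---matches the paper. But the concrete mechanism you sketch diverges from the actual encoding in ways that break your primality argument.

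First, the encoding of Section~\ref{sec:bijections} is \emph{not} a left-to-right sweep over vertices. It enumerates the \emph{trapezoids} of $\mathcal{D}_T$ in a canonical order that is essentially bottom-up: the next trapezoid is the left-most one whose lower bounding edge is already fully supported from below. To each inner vertex $i$ one attributes the four trapezoids whose vertical boundary is the extension through $i$ (below/above, on the right/left); these are mapped to $\langle$, $|$, $|$, $\rangle$ respectively, with the two middle types always consecutive and fused into a single $|$. This yields a word of length $3m$. Your Euler computation (``$2n-2$ edges'', ``$m$ bounded triangular faces'') is incorrect---a triangulation with triangular hull has $3n-6$ edges and $2n-5$ triangles---and is in any case not how the length $3m$ arises.

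Second, because the order is bottom-up rather than left-to-right, a contiguous substring of the code word does \emph{not} correspond to a vertical strip of the diagram, so your ``two vertical cuts cross no edge'' argument for primality does not apply. The paper proves primality through the reconstruction instead: processing \emph{any} balanced bracket expression (prime or not) builds a valid $x$-monotone drawing, but each time a contiguous balanced block has just been consumed, the boundary staircase collapses to a single step and a new edge between the current extreme vertices is laid down---i.e.\ a double edge is created. Prime expressions are thus exactly those whose reconstruction introduces no double edge beyond the mandatory one between $1$ and $n$. This is the real role of the added top arc in Definition~\ref{def:diagtr}.

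Third, you flag realizability but do not supply it. The reconstruction only produces an $x$-monotone plane drawing with Jordan-curve edges; one still needs the stretchability theorem for simple $x$-monotone plane drawings (cited in the paper) to conclude that the drawing is the trapezoidal diagram of an actual straight-line triangulation on some point set with the prescribed hull shape.
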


The following corollary should be compared with a classic result of Tutte \cite{Tutte62}, which implies that the number of abstract triangulations (i.e., maximal planar graphs) on $n$ vertices is $\Theta^\ast(\delta^n)$, for $\delta = 256/27 \approx 9.481$.
Moreover, quite curiously, in a side remark of \cite{AS13} the authors report an upper bound of $27^n$ on the number of upward triangulations.
As we shall see, the appearance of the exponential base $27$ is not at all coincidental.

\begin{corollary}
  Let $N_n$ be the number of abstract upward triangulations on $n$ vertices.
  Then, we have $|\trdiags|/2 \leq N_n \leq |\trdiags|$ and hence also $N_n = \Theta^\ast(\gamma_3^n)$.
\end{corollary}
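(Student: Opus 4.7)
The plan is to exhibit a surjection $\Phi:\trdiags \to \mathcal{U}_n$, where $\mathcal{U}_n$ denotes the set of abstract upward triangulations on $n$ vertices, whose fibers have size $1$ or $2$, exactly as sketched in the paragraph following Definition~\ref{def:diagtr}. The corollary then follows immediately: surjectivity gives $N_n \le |\trdiags|$, the fiber bound gives $|\trdiags| \le 2N_n$ and hence $|\trdiags|/2 \le N_n$, and invoking Theorem~\ref{thm:trdiags} yields the asymptotic conclusion $N_n = \Theta^\ast(\gamma_3^n)$.

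I would define $\Phi$ by taking a trapezoidal diagram $\mathcal{D}_T$ and orienting every edge $\{i,j\}$ from left to right, i.e.\ $i \to j$ whenever $i<j$. Any drawing of $\mathcal{D}_T$ consists of $x$-monotone Jordan curves on vertices with distinct $x$-coordinates; rotating the whole picture by $90^\circ$ produces a planar embedding with $y$-monotone upward edges, which together with the maximality inherited from $T$ confirms $\Phi(\mathcal{D}_T) \in \mathcal{U}_n$. For surjectivity, I start with an upward triangulation $U \in \mathcal{U}_n$ and a witnessing embedding with $y$-monotone upward edges. Vertices automatically have distinct $y$-coordinates (otherwise an edge through two of them could not be strictly $y$-monotone), so a $-90^\circ$ rotation produces a point set $S$ with distinct $x$-coordinates and a plane graph with $x$-monotone edges. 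After straightening the edges (or, equivalently, reading off the visibility data directly from the $x$-monotone drawing) and adding the upper-envelope edge required by Definition~\ref{def:diagtr}, one obtains a trapezoidal diagram $\mathcal{D}_T$ with $\Phi(\mathcal{D}_T)=U$.

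For the fiber bound I would observe first that a vertical reflection of any trapezoidal diagram is again a trapezoidal diagram (it corresponds to reflecting the underlying geometric triangulation vertically, before adding the upper envelope); it fixes the left-to-right edge orientations but swaps the ``sees below'' and ``sees above'' data, hence maps into the same fiber of $\Phi$. Conversely, any two elements of $\Phi^{-1}(U)$ share the same underlying ordered abstract graph, so they can differ only in their visibility data; and for a fixed upward triangulation this data is forced up to the global above/below swap, because the upper and lower envelopes through each vertex are determined by the planar embedding of $U$ together with a choice of the ``upper side'' of the unbounded face. This gives $|\Phi^{-1}(U)| \in \{1,2\}$, with the value depending on whether $U$ admits the relevant reflective symmetry.

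The main subtlety is the fiber analysis: while pictorially obvious, a fully rigorous argument requires translating between the combinatorial visibility data encoded in $\mathcal{D}_T$ and the rotation system of $U$, and checking that the two planar embeddings of $U$ related by a vertical flip are precisely the two preimages. Once the surjectivity and fiber bound are in place, the sandwich $|\trdiags|/2 \le N_n \le |\trdiags|$ together with Theorem~\ref{thm:trdiags} yields the claimed exponential rate.
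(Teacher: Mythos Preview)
Your proposal is correct and follows precisely the approach the paper itself takes: the paper offers no formal proof of this corollary, relying instead on the informal paragraph preceding Definition~\ref{def:diagtr} (which you cite), where it asserts that orienting edges from left to right sends each trapezoidal diagram to a unique upward triangulation and that each upward triangulation has one or two preimages depending on symmetry. Your write-up simply spells out this surjection-with-bounded-fibers argument in more detail; the one ingredient you might make explicit for surjectivity is the stretchability result (invoked in the proof of Lemma~\ref{lem:trdiags}) guaranteeing that the $x$-monotone drawing obtained from an upward embedding is the diagram of an actual straight-line triangulation.
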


\paragraph{Number of Embeddings.}

Originally, our interest in trapezoidal diagrams came from a desire for improved upper bounds on the maximum number of crossing-free geometric graphs on any set $S$ of $n$ points.
A classic result due to Ajtai et al.\ \cite{ACNS82} implies that, for any family of graphs, this maximum number is equal to $\Theta^\ast(\delta^n)$ for some absolute constant $\delta$.
Upper bounds for these numbers have been improved gradually over the past decades, culminating in $\delta \leq 10.05$ for perfect matchings \cite{SW06} and $\delta \leq 30$ for triangulations~\cite{SS11}.
However, there are no matching lower bounds, and the general consensus is that the known upper bounds are still far away from the truth.

In Section \ref{sec:embeddings} we initiate the study of the maximum number of embeddings that a given trapezoidal diagram can have on a fixed point set.
While we were able to find two simple exponential lower bounds, we did not succeed in proving strong enough upper bounds so as to obtain improved bounds for the aforementioned constants $\delta$.
\section{Encoding Trapezoidal Diagrams}\label{sec:bijections}

Let $G$ be a crossing-free geometric graph of one of the investigated types.
Fix a drawing of $\mathcal{D}_G$ and consider the set of all points in the plane which are neither a vertex, nor part of an edge, nor a vertical extension.
Then, a \emph{trapezoid} in $\mathcal{D}_G$ is defined as the closure of a maximal connected region in that set.
Typically, but not always, a trapezoid is bounded from above and below by (parts of) edges of $G$, and to the left and right by vertical extensions.

We further define a \emph{canonical order} over the trapezoids in $\mathcal{D}_G$ in the following recursive manner.
Given a prefix of the canonical order, we select as the next element a trapezoid that is either unbounded from below or that is bounded from below by an edge $e$ which is already well-supported, in the sense that all trapezoids having $e$ as their upper boundary occur in the given prefix of the canonical order.
If the above choice is not unique, then we settle with the left-most option.

By the following observation, which follows by induction over the length of the given prefix, the canonical order is seen to be both well-defined and independent of the fixed drawing of $\mathcal{D}_G$.

\begin{observation}\label{obs:staircase}
  Take any proper (both non-empty and incomplete) prefix of the canonical order of $\mathcal{D}_G$, build the union of all trapezoids in that prefix, and consider the boundary of that union.
  This boundary has a stair-case shape as depicted in Figure \ref{fig:canonical}.
  Specifically:
  \begin{itemize}
    \item Starting at positive infinity at the end of a vertical extension which is unbounded from above, the boundary alternates between verticals that go downwards and (parts of) not necessarily straight edges that go to the right, and it finally ends at negative infinity at the end of a vertical extension which is unbounded from below.
    \item Every vertical on the boundary contains exactly one vertex of $G$, either (a) at the bottom, (b) in its relative interior, or (c) at the top. When going along the boundary, we first encounter a (possibly empty) sequence of verticals of type (a), then at most one vertical of type (b), and then a (possibly empty) sequence of verticals of type (c).
  \end{itemize}
  Further note that the subsequent trapezoid in canonical order must be bounded to the left by the last vertical that is not of type (c).
\end{observation}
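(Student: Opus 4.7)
The plan is to establish Observation \ref{obs:staircase} by strong induction on the length $i$ of the prefix of the canonical order under consideration. The invariant I maintain is the conjunction of two assertions: (i) after the first $i$ trapezoids have been selected, the boundary of their union $U_i$ has the staircase shape described in the observation; and (ii) this shape, together with the canonical-order definition, unambiguously determines the next trapezoid $\tau_{i+1}$ as the one whose left side coincides with the last vertical on the boundary that is not of type (c). Since the staircase structure is manifestly a function of $\mathcal{D}_G$ and makes no reference to any coordinates, both the well-definedness and the drawing-independence of the canonical order will fall out as byproducts.

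For the base case $i = 1$, no edges have yet served as the top of a selected trapezoid, so the selection rule forces $\tau_1$ to be unbounded from below; among such trapezoids, the leftmost is identified unambiguously by the combinatorial data of $\mathcal{D}_G$. Its boundary is a trivial staircase with two or three verticals, whose types can be read off directly and seen to satisfy the required pattern.

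For the inductive step, assuming $B_i$ has the claimed form, let $v^{*}$ be its last non-(c) vertical. I first verify that the trapezoid $\tau_{i+1}$ lying immediately to the right of $v^{*}$ satisfies the canonical-order rule: its floor, if any, must already be well-supported, because the inductive hypothesis ensures that the edges serving so far as upper boundaries of selected trapezoids are precisely those implicitly recorded by the staircase. I then argue that $\tau_{i+1}$ is the \emph{leftmost} such candidate, since any alternative further to the left would contradict the monotonicity of the staircase. Finally, I obtain $B_{i+1}$ from $B_i$ by replacing $v^{*}$ (and portions of the adjacent edge segments) with the top edge and right side of $\tau_{i+1}$, and perform a short case analysis based on the type of $v^{*}$ and on how the vertex on $\tau_{i+1}$'s right side interacts with its own upward extension, in order to confirm that $B_{i+1}$ still exhibits the prescribed pattern of (a)-verticals followed by at most one (b)-vertical followed by (c)-verticals.

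The main obstacle will be precisely this case analysis in the inductive step: carefully tracking how each vertical's type transforms under the local modification at $v^{*}$, and ensuring that no configuration is missed. A clean presentation would likely rely on a labelled schematic rather than exhaustive symbolic enumeration. The qualitative observation guiding the argument is that adding $\tau_{i+1}$ either promotes the (b)-vertical (if present) to a (c)-vertical, converts $v^{*}$ itself into a (c)-vertical, produces a new (b)- or (c)-vertical on $\tau_{i+1}$'s right side, and optionally introduces a fresh (a)-vertical just to the left of where $v^{*}$ used to sit --- in every case preserving the required left-to-right ordering of types.
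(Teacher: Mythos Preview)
Your proposal is correct and follows precisely the approach the paper indicates: the paper merely states that the observation ``follows by induction over the length of the given prefix'' without supplying any further details, so your write-up is a faithful (and considerably more explicit) expansion of that one-line justification. The inductive invariant you maintain and the case analysis you outline are exactly what is needed to make the paper's sketch rigorous.
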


\begin{figure}
  \begin{center}
        \begin{tikzpicture}[scale=0.48,xscale=0.96]
      \begin{scope}[xshift=-30]
        \coordinate (c1) at (1,0);
        \coordinate (c2) at (3.2,0);
        \coordinate (c3) at (4.4,0);
        \coordinate (c4) at (8,0);
        \coordinate (c5) at (2,1);
        \coordinate (c6) at (5,1);
        \coordinate (c7) at (6,1);
        \coordinate (c8) at (7,1);
        \coordinate (c9) at (0,2);
        \coordinate (c10) at (2.6,2);
        \coordinate (c11) at (3.8,2);
        \coordinate (c12) at (9,2);
        \coordinate (c13) at (1.5,3);
        \coordinate (c14) at (6.5,3);
        \trapinit{3.75}{-0.75}{10}
        \trapedge{c1}{c2}{e1}
        \trapedge{c3}{c4}{e2}
        \trapedge{c5}{c6}{e3}
        \trapedge{c7}{c8}{e4}
        \trapedge{c9}{c10}{e5}
        \trapedge{c11}{c12}{e6}
        \trapedge{c13}{c14}{e7}
        \trapbord{c1}{e5}{floor}
        \trapbord{c2}{e3}{floor}
        \trapbord{c3}{e3}{floor}
        \trapbord{c4}{e6}{floor}
        \trapbord{c5}{e5}{e1}
        \trapbord{c6}{e6}{e2}
        \trapbord{c7}{e6}{e2}
        \trapbord{c8}{e6}{e2}
        \trapbord{c9}{ceil}{floor}
        \trapbord{c10}{e7}{e3}
        \trapbord{c11}{e7}{e3}
        \trapbord{c12}{ceil}{floor}
        \trapbord{c13}{ceil}{e5}
        \trapbord{c14}{ceil}{e6}
        \node[node] at (c1) {};
        \node[node] at (c2) {};
        \node[node] at (c3) {};
        \node[node] at (c4) {};
        \node[node] at (c5) {};
        \node[node] at (c6) {};
        \node[node] at (c7) {};
        \node[node] at (c8) {};
        \node[node] at (c9) {};
        \node[node] at (c10) {};
        \node[node] at (c11) {};
        \node[node] at (c12) {};
        \node[node] at (c13) {};
        \node[node] at (c14) {};
        \node at (-0.5,1.5) {\footnotesize 1};
        \node at (0.5,0.5) {\footnotesize 2};
        \node at (2.1,-0.5) {\footnotesize 3};
        \node at (1.5,1) {\footnotesize 4};
        \node at (2.6,0.5) {\footnotesize 5};
        \node at (3.8,0) {\footnotesize 6};
        \node at (6.2,-0.5) {\footnotesize 7};
        \node at (4.7,0.5) {\footnotesize 8};
        \node at (2.3,1.5) {\footnotesize 9};
        \node at (0.75,3) {\footnotesize 10};
        \node at (2.05,2.5) {\footnotesize 11};
        \node at (3.2,2) {\footnotesize 12};
        \node at (4.4,1.5) {\footnotesize 13};
        \node at (5.5,1) {\footnotesize 14};
        \node at (6.5,0.5) {\footnotesize 15};
        \node at (6.5,1.5) {\footnotesize 16};
        \node at (7.5,1) {\footnotesize 17};
        \node at (8.5,0.5) {\footnotesize 18};
        \node at (5.15,2.5) {\footnotesize 19};
        \node at (4,3.5) {\footnotesize 20};
        \node at (7.75,3) {\footnotesize 21};
        \node at (9.5,1.5) {\footnotesize 22};
      \end{scope}
      \begin{scope}[xshift=325]
        \coordinate (c1) at (1,0);
        \coordinate (c2) at (3.2,0);
        \coordinate (c3) at (4.4,0);
        \coordinate (c4) at (8,0);
        \coordinate (c5) at (2,1);
        \coordinate (c6) at (5,1);
        \coordinate (c7) at (6,1);
        \coordinate (c8) at (7,1);
        \coordinate (c9) at (0,2);
        \coordinate (c10) at (2.6,2);
        \coordinate (c11) at (3.8,2);
        \coordinate (c12) at (9,2);
        \coordinate (c13) at (1.5,3);
        \coordinate (c14) at (6.5,3);
        \fill[black!16!white]
          (0,3.75) -- (c9) -- (2,2) -- (c5) -- (3.2,1) --
          (c2) -- (3.2,-0.75) -- (-0.75,-0.75) -- (-0.75,3.75) -- cycle;
        \draw[trap] (0,3.75) -- (c9);
        \draw[trapedge] (2,2) -- (c9);
        \draw[trap] (2,2) -- (c5);
        \draw[trapedge] (3.2,1) -- (c5);
        \draw[trap] (3.2,1) -- (c2);
        \draw[trap] (3.2,-0.75) -- (c2);
        \node[node] at (c2) {};
        \node[node] at (c5) {};
        \node[node] at (c9) {};
      \end{scope}
      \begin{scope}[xshift=460]
        \coordinate (c1) at (1,0);
        \coordinate (c2) at (3.2,0);
        \coordinate (c3) at (4.4,0);
        \coordinate (c4) at (8,0);
        \coordinate (c5) at (2,1);
        \coordinate (c6) at (5,1);
        \coordinate (c7) at (6,1);
        \coordinate (c8) at (7,1);
        \coordinate (c9) at (0,2);
        \coordinate (c10) at (2.6,2);
        \coordinate (c11) at (3.8,2);
        \coordinate (c12) at (9,2);
        \coordinate (c13) at (1.5,3);
        \coordinate (c14) at (6.5,3);
        \fill[black!16!white]
          (1.5,3.75) -- (c13) -- (3.8,3) -- (c11) --
          (6,2) -- (c7) -- (c8) -- (7,0) -- (c4) --
          (8,-0.75) -- (-0.75,-0.75) -- (-0.75,3.75) -- cycle;
        \draw[trap] (1.5,3.75) -- (c13);
        \draw[trapedge] (3.8,3) -- (c13);
        \draw[trap] (3.8,3) -- (c11);
        \draw[trapedge] (6,2) -- (c11);
        \draw[trap] (6,2) -- (c7);
        \draw[trapedge] (c7) -- (c8);
        \draw[trap] (7,0) -- (c8);
        \draw[trapedge] (7,0) -- (c4);
        \draw[trap] (8,-0.75) -- (c4);
        
        \node[node] at (c13) {};
        \node[node] at (c11) {};
        \node[node] at (c7) {};
        \node[node] at (c8) {};
        \node[node] at (c4) {};
      \end{scope}
    \end{tikzpicture}
  \end{center}
  \caption{The canonical order of a trapezoidal diagram, and the boundaries corresponding to the prefixes $1,\dots,5$ and $1,\dots,15$.}
  \label{fig:canonical}
\end{figure}
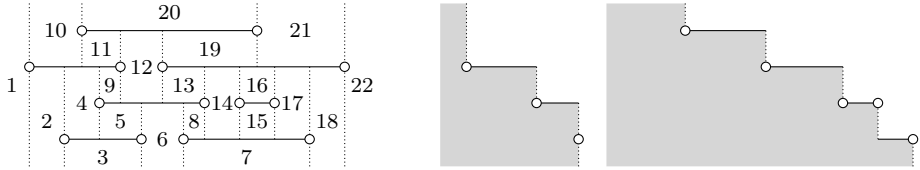

We are ready to prove the following lemma.
Combining it with equations (\ref{eq:catalan_three_exact}) and (\ref{eq:catalan_three_asympt}) yields Theorem \ref{thm:pmdiags}.

\begin{lemma}\label{lem:pmdiags}
  For any $n = 2m$ there is a bijection between $\pmdiags$ and the set of balanced bracket expressions of dimension 3 and of size $m$.
\end{lemma}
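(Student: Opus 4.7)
The plan is to define an explicit encoding $\Phi : \pmdiags \to \{\langle,|,\rangle\}^{3m}$ based on the canonical order of trapezoids. A simple incremental counting argument -- start from the plane as a single region, then insert the $2m$ vertical segments through vertices and the $m$ edges one at a time, each subdividing exactly one existing region -- shows that any trapezoidal diagram of a perfect matching on $2m$ points contains exactly $3m+1$ trapezoids. So the canonical order $T_0, T_1, \ldots, T_{3m}$ defines exactly $3m$ consecutive transitions $T_{i-1} \to T_i$, each of which will supply one symbol of the output word.

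For each transition, I would inspect the boundary shared between $T_i$ and its immediate predecessor $T_{i-1}$. The main combinatorial claim is that this boundary is always a single contiguous piece belonging to one structural element of exactly one of three types: (a) the downward extension of a vertex that is the left endpoint of its matching edge, (b) an edge of $M$, or (c) the upward extension of a vertex that is the right endpoint of its matching edge. The two a priori possible alternatives -- the upward extension of a left endpoint and the downward extension of a right endpoint -- are ruled out because the canonical order processes the plane roughly from bottom-left to top-right: by the time the trapezoid sitting on one side of such an extension is the leftmost available, its counterpart on the other side has already been absorbed into the processed region via some earlier transition, so that extension has already become interior. Encoding types (a), (b), (c) as $\langle$, $|$, $\rangle$ respectively then produces a word of length $3m$ with exactly $m$ occurrences of each symbol, since each left endpoint contributes one $\langle$, each edge one $|$, and each right endpoint one $\rangle$.

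To see that the resulting word is a valid balanced bracket expression, I would verify the prefix condition edge by edge. Fix a matching edge $e=\{u,v\}$ with $u$ to the left of $v$. The $|$-symbol for $e$ is produced at the transition that adds the first (leftmost) above-trap of $e$; by definition of the canonical order, this above-trap only becomes available once every below-trap of $e$ has been processed, and in particular after the leftmost below-trap (whose left boundary is $u$'s downward extension) has been crossed at the transition carrying the $\langle$-symbol for $u$. So $\langle$ for $u$ always precedes $|$ for $e$. An analogous argument with above/below and left/right interchanged shows that $|$ for $e$ always precedes $\rangle$ for $v$. Summing these pairwise inequalities over all $m$ edges gives that in every prefix the count of $\langle$'s is at least the count of $|$'s, which is at least the count of $\rangle$'s, which is exactly the prefix condition defining a balanced bracket expression of dimension $3$.

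Bijectivity follows from an explicit inverse $\Psi$: given a balanced bracket expression $w$, I would scan it left to right and reconstruct the canonical order trapezoid by trapezoid, mirroring the staircase evolution described in Observation \ref{obs:staircase}. A $\langle$ introduces a fresh left-endpoint vertex together with its downward extension and the new below-trap just to its right; a $|$ crosses the edge currently exposed at the top of the staircase, inserting its corresponding above-trap; a $\rangle$ introduces a fresh right-endpoint vertex together with its upward extension, pairing it with the innermost pending left endpoint. I expect the main obstacle will be the rigorous justification of the shared-boundary classification (a) -- (c) above; once that is established, both the prefix condition and the well-definedness and inverse property of $\Psi$ reduce to bookkeeping against the staircase structure guaranteed by Observation \ref{obs:staircase}.
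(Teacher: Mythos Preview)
Your forward encoding is correct and in fact produces the same word as the paper's: the paper classifies each trapezoid $T_{k-1}$ by its \emph{right} boundary (vertical through a left endpoint, below-vertical of a right endpoint, or above-vertical of a right endpoint), whereas you classify the transition $T_{k-1}\to T_k$ by the boundary \emph{shared} with $T_k$. A short case analysis shows these agree symbol for symbol: when $T_{k-1}$ has paper-type~(i) the two share the downward extension of the new left endpoint; when paper-type~(ii) they share a piece of the completed edge; when paper-type~(iii) they share the upward extension of the right endpoint. Your prefix-condition argument is also sound. One assumption you leave implicit is that $T_{k-1}$ and $T_k$ are always adjacent; this is true but deserves a line of justification via Observation~\ref{obs:staircase} (in the type~(ii) case $T_k$ sits \emph{above} $T_{k-1}$, not to its right).

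There is, however, a genuine error in your description of $\Psi$. You write that a $\rangle$ ``introduces a fresh right-endpoint vertex \dots\ pairing it with the innermost pending left endpoint'', while a $|$ merely ``crosses the edge currently exposed''. This is backward: the right endpoint must already exist before its edge can be crossed, so it is the $|$ step that places the new right-endpoint vertex (at the top-right corner of the trapezoid being completed), and the $\rangle$ step only advances the staircase past an already-placed right endpoint via its upward extension. Worse, the ``innermost pending left endpoint'' rule is false. For the word $\langle\,|\,\langle\,\rangle\,|\,\rangle$, which decodes to two interleaved edges with $u_1<u_2<v_1<v_2$, the first $\rangle$ corresponds to the upward extension of $v_1$, whose partner is $u_1$---yet the innermost pending left endpoint at that moment is $u_2$. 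The pairing is dictated by the staircase geometry (which right endpoint currently sits at the bottom of the active step), not by a LIFO discipline on left endpoints. The paper sidesteps this by specifying, for each bracket, exactly how the staircase boundary mutates, rather than attempting to name the matched vertex directly.
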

\begin{proof}
  For $m = 0$ the claim is trivial.
  So let $m \geq 1$, and let us define mappings in both directions.
  Observing that these mappings are inverses of each other concludes the proof.
  
  \paragraph{From trapezoids to brackets.}
  Let $\mathcal{D} = \mathcal{D}_M$ be the trapezoidal diagram of an arbitrary perfect matching $M$ on a set $S$ of $n$ points.
  We show how to construct the corresponding balanced bracket expression $c$ of size $m$.
  
  We first enumerate the trapezoids in $\mathcal{D}$ in canonical order, where we omit the last trapezoid on the far right.
  We obtain a sequence of exactly $3m$ trapezoids, each one of which is bounded to the right.
  Indeed, observe that to each edge $e = \{i,j\}$ in $\mathcal{D}$, where $i < j$ are the respective left and right endpoints, we can attribute the following three trapezoids.
  \begin{enumerate}[(i)]
    \item 
      The trapezoid whose right boundary is the vertical extension through $i$.
    \item 
      The trapezoid whose right boundary is the vertical extension below $j$.
    \item 
      The trapezoid whose right boundary is the vertical extension above $j$.
  \end{enumerate}
  
  In order to obtain $c$, we now apply the substitution rules $\text{(i)} \mapsto \mlq\langle\mrq$, $\text{(ii)} \mapsto \mlq|\mrq$, $\text{(iii)} \mapsto\mlq\rangle\mrq$, based on the three types of trapezoids specified above.
  The resulting string must be a balanced bracket expression of size $m$ because, as is clear from Observation \ref{obs:staircase}, for each edge $e$ the three attributed trapezoids occur in the relative order (i), (ii), (iii).
  
  \paragraph{From brackets to trapezoids.}
  Let $c$ be an arbitrary balanced bracket expression of dimension $3$ and of size $m$.
  We show how to construct the corresponding trapezoidal diagram $\mathcal{D} = \mathcal{D}_M$ of a perfect matching $M$ on $n$ points.
  
  We iterate over $c$ and construct a drawing of $\mathcal{D}$ by drawing one trapezoid per letter in $c$.
  For each bracket we select a different type of trapezoid.
  More precisely, as follows, we discriminate between the possible locations of the vertex $i$ that lies on the right boundary of the new trapezoid.
 
\newcommand{\eps}{0.15}
\newcommand{\trappmi}{
\begin{tikzpicture}[scale=0.45,baseline=(O.base)]
  \clip (0-\eps,0-\eps) rectangle (1+\eps,1+\eps);
  \node (O) at (0,+2mm) {};
  \fill[black!16!white] (0,0) -- (1,0) -- (1,1) -- (0,1) -- cycle;
  \draw[trapedge] (0,0) -- (1,0);
  \draw[trapedge] (0,1) -- (1,1);
  \draw[trap] (0,0) -- (0,1);
  \draw[trap] (1,0) -- (1,1);
  \node[node] at (1,0.5) {};
\end{tikzpicture}
}
\newcommand{\trappmii}{
\begin{tikzpicture}[scale=0.45,baseline=(O.base)]
  \clip (0-\eps,0-\eps) rectangle (1+\eps,1+\eps);
  \node (O) at (0,+2mm) {};
  \fill[black!16!white] (0,0) -- (1,0) -- (1,1) -- (0,1) -- cycle;
  \draw[trapedge] (0,0) -- (1,0);
  \draw[trapedge] (0,1) -- (1,1);
  \draw[trap] (0,0) -- (0,1);
  \draw[trap] (1,0) -- (1,1);
  \node[node] at (1,1) {};
\end{tikzpicture}
}
\newcommand{\trappmiii}{
\begin{tikzpicture}[scale=0.45,baseline=(O.base)]
  \clip (0-\eps,0-\eps) rectangle (1+\eps,1+\eps);
  \node (O) at (0,+2mm) {};
  \fill[black!16!white] (0,0) -- (1,0) -- (1,1) -- (0,1) -- cycle;
  \draw[trapedge] (0,0) -- (1,0);
  \draw[trapedge] (0,1) -- (1,1);
  \draw[trap] (0,0) -- (0,1);
  \draw[trap] (1,0) -- (1,1);
  \node[node] at (1,0) {};
\end{tikzpicture}
}
  \begin{enumerate}[(i)]
    \item 
      For \fixwidth{$\mlq\langle\mrq$}{$\mlq\langle\mrq$}{l} we select\trappmi, i.e., $i$ is in the interior of the right boundary.
    \item 
      For \fixwidth{$\mlq|\mrq$}{$\mlq\langle\mrq$}{l} we select\trappmii, i.e., $i$ is at the top of the right boundary.
    \item 
      For \fixwidth{$\mlq\rangle\mrq$}{$\mlq\langle\mrq$}{l} we select\trappmiii, i.e., $i$ is at the bottom of the right boundary.
  \end{enumerate}
  
  In the illustrations above we have omitted to draw the vertices on the respective left boundaries.
  Also, the trapezoids of types (i) and (ii), but not (iii), might in fact be unbounded from below.
  Similarly, the trapezoids of type (i) and (iii), but not (ii), might be unbounded from above.
  
  The positioning of individual trapezoids is done as illustrated in Table~\ref{tab:diagpmrec}, where the labels $l$ and $r$ indicate whether a boundary vertex is a left or right endpoint.
  Mutations which involve unbounded trapezoids can be handled analogously. 
  Also note that, after each step, the boundary of the union of all drawn trapezoids has a stair-case shape as in Observation \ref{obs:staircase}, and the order in which we add trapezoids corresponds to the canonical order.
 
\begin{table}
  \begin{center}
     \begin{tabular}{*4{>{\centering\arraybackslash}m{2.6cm}}}
    \toprule
      Before & \multicolumn{3}{c}{After} \\
      \addlinespace[0.25em]
      & $\mlq\langle\mrq$ & $\mlq|\mrq$ & $\mlq\rangle\mrq$ \\
    \cmidrule(lr){1-1}\cmidrule(lr){2-4}
    \addlinespace[0.25em]
      \begin{tikzpicture}[scale=0.5,xscale=1.35]
        \fill[black!16!white]
          (0,3) -- (1,3) -- (1,2.25) -- (2,2.25) --
          (2,0.75) -- (3,0.75) -- (3,0) -- (0,0) -- cycle;
        \draw[trap] (1,3) -- (1,2.25);
        \draw[trapedge] (1,2.25) -- (2,2.25);
        \draw[trap] (2,2.25) -- (2,0.75);
        \draw[trapedge] (2,0.75) -- (3,0.75);
        \draw[trap] (3,0.75) -- (3,0);
        \node[node,label=left:$l$] at (1,2.25) {};
        \node[inactivenode,label=left:$l$] at (2,0.75) {};
        \node[node,label=right:$r$] at (3,0.75) {};
      \end{tikzpicture}
      &
      \begin{tikzpicture}[scale=0.5,xscale=1.35]
        \fill[black!16!white]
          (0,3) -- (1,3) -- (1,2.25) -- (2.5,2.25) --
          (2.5,0.75) -- (3,0.75) -- (3,0) -- (0,0) -- cycle;
        \draw[trap] (1,3) -- (1,2.25);
        \draw[trapedge] (1,2.25) -- (2.5,2.25);
        \draw[trap] (2,2.25) -- (2,0.75);
        \draw[trap] (2.5,2.25) -- (2.5,0.75);
        \draw[trapedge] (2,0.75) -- (3,0.75);
        \draw[trap] (3,0.75) -- (3,0);
        \node[node,label=left:$l$] at (1,2.25) {};
        \node[inactivenode] at (2,0.75) {};
        \node[node,label=right:$r$] at (3,0.75) {};
        \node[node,label=right:$l$] at (2.5,1.5) {};
      \end{tikzpicture}
      &
      \begin{tikzpicture}[scale=0.5,xscale=1.35]
        \fill[black!16!white]
          (0,3) -- (1,3) -- (1,2.25) -- (2.5,2.25) --
          (2.5,0.75) -- (3,0.75) -- (3,0) -- (0,0) -- cycle;
        \draw[trap] (1,3) -- (1,2.25);
        \draw[trapedge] (1,2.25) -- (2.5,2.25);
        \draw[trap] (2,2.25) -- (2,0.75);
        \draw[trap] (2.5,2.25) -- (2.5,0.75);
        \draw[trapedge] (2,0.75) -- (3,0.75);
        \draw[trap] (3,0.75) -- (3,0);
        \node[inactivenode,label=left:$l$] at (1,2.25) {};
        \node[inactivenode] at (2,0.75) {};
        \node[node,label=right:$r$] at (3,0.75) {};
        \node[node,label=right:$r$] at (2.5,2.25) {};
      \end{tikzpicture}
      &
      \begin{tikzpicture}[scale=0.5,xscale=1.35]
        \fill[black!16!white]
          (0,3) -- (1,3) -- (1,2.25) -- (3,2.25) --
          (3,0) -- (0,0) -- cycle;
        \draw[trap] (1,3) -- (1,2.25);
        \draw[trapedge] (1,2.25) -- (3,2.25);
        \draw[trap] (3,2.25) -- (3,0.75);
        \draw[trap] (2,2.25) -- (2,0.75);
        \draw[trapedge] (2,0.75) -- (3,0.75);
        \draw[trap] (3,0.75) -- (3,0);
        \node[node,label=left:$l$] at (1,2.25) {};
        \node[inactivenode] at (2,0.75) {};
        \node[inactivenode,label=right:$r$] at (3,0.75) {};
      \end{tikzpicture}
      \\
      \addlinespace[0.25em]
      \begin{tikzpicture}[scale=0.5,xscale=1.35]
        \fill[black!16!white]
          (0,3) -- (1,3) -- (1,2.25) -- (2,2.25) --
          (2,0.75) -- (3,0.75) -- (3,0) -- (0,0) -- cycle;
        \draw[trap] (1,3) -- (1,2.25);
        \draw[trapedge] (1,2.25) -- (2,2.25);
        \draw[trap] (2,2.25) -- (2,0.75);
        \draw[trapedge] (2,0.75) -- (3,0.75);
        \draw[trap] (3,0.75) -- (3,0);
        \node[node,label=left:$l$] at (1,2.25) {};
        \node[inactivenode,label=left:$r$] at (2,1.5) {};
        \node[node,label=right:$r$] at (3,0.75) {};
      \end{tikzpicture}
      &
      \begin{tikzpicture}[scale=0.5,xscale=1.35]
        \fill[black!16!white]
          (0,3) -- (1,3) -- (1,2.25) -- (2.5,2.25) --
          (2.5,0.75) -- (3,0.75) -- (3,0) -- (0,0) -- cycle;
        \draw[trap] (1,3) -- (1,2.25);
        \draw[trapedge] (1,2.25) -- (2.5,2.25);
        \draw[trap] (2,2.25) -- (2,0.75);
        \draw[trap] (2.5,2.25) -- (2.5,0.75);
        \draw[trapedge] (2,0.75) -- (3,0.75);
        \draw[trap] (3,0.75) -- (3,0);
        \node[node,label=left:$l$] at (1,2.25) {};
        \node[inactivenode] at (2,1.5) {};
        \node[node,label=right:$r$] at (3,0.75) {};
        \node[node,label=right:$l$] at (2.5,1.5) {};
      \end{tikzpicture}
      &
      \begin{tikzpicture}[scale=0.5,xscale=1.35]
        \fill[black!16!white]
          (0,3) -- (1,3) -- (1,2.25) -- (2.5,2.25) --
          (2.5,0.75) -- (3,0.75) -- (3,0) -- (0,0) -- cycle;
        \draw[trap] (1,3) -- (1,2.25);
        \draw[trapedge] (1,2.25) -- (2.5,2.25);
        \draw[trap] (2,2.25) -- (2,0.75);
        \draw[trap] (2.5,2.25) -- (2.5,0.75);
        \draw[trapedge] (2,0.75) -- (3,0.75);
        \draw[trap] (3,0.75) -- (3,0);
        \node[inactivenode,label=left:$l$] at (1,2.25) {};
        \node[inactivenode] at (2,1.5) {};
        \node[node,label=right:$r$] at (3,0.75) {};
        \node[node,label=right:$r$] at (2.5,2.25) {};
      \end{tikzpicture}
      &
      \begin{tikzpicture}[scale=0.5,xscale=1.35]
        \fill[black!16!white]
          (0,3) -- (1,3) -- (1,2.25) -- (3,2.25) --
          (3,0) -- (0,0) -- cycle;
        \draw[trap] (1,3) -- (1,2.25);
        \draw[trapedge] (1,2.25) -- (3,2.25);
        \draw[trap] (3,2.25) -- (3,0.75);
        \draw[trap] (2,2.25) -- (2,0.75);
        \draw[trapedge] (2,0.75) -- (3,0.75);
        \draw[trap] (3,0.75) -- (3,0);
        \node[node,label=left:$l$] at (1,2.25) {};
        \node[inactivenode] at (2,1.5) {};
        \node[inactivenode,label=right:$r$] at (3,0.75) {};
      \end{tikzpicture}
      \\
      \addlinespace[0.25em]
      \begin{tikzpicture}[scale=0.5,xscale=1.35]
        \fill[black!16!white]
          (0,3) -- (1,3) -- (1,2.25) -- (2,2.25) --
          (2,0.75) -- (3,0.75) -- (3,0) -- (0,0) -- cycle;
        \draw[trap] (1,3) -- (1,2.25);
        \draw[trapedge] (1,2.25) -- (2,2.25);
        \draw[trap] (2,2.25) -- (2,0.75);
        \draw[trapedge] (2,0.75) -- (3,0.75);
        \draw[trap] (3,0.75) -- (3,0);
        \node[node,label=left:$l$] at (1,2.25) {};
        \node[node,label=left:$l$] at (2,1.5) {};
        \node[node,label=right:$r$] at (3,0.75) {};
      \end{tikzpicture}
      &
      \begin{tikzpicture}[scale=0.5,xscale=1.35]
        \fill[black!16!white]
          (0,3) -- (1,3) -- (1,2.25) -- (2,2.25) -- (2,1.5) -- (2.5,1.5) --
          (2.5,0.75) -- (3,0.75) -- (3,0) -- (0,0) -- cycle;
        \draw[trap] (1,3) -- (1,2.25);
        \draw[trapedge] (1,2.25) -- (2,2.25);
        \draw[trapedge] (2,1.5) -- (2.5,1.5);
        \draw[trap] (2,2.25) -- (2,0.75);
        \draw[trap] (2.5,1.5) -- (2.5,0.75);
        \draw[trapedge] (2,0.75) -- (3,0.75);
        \draw[trap] (3,0.75) -- (3,0);
        \node[node,label=left:$l$] at (1,2.25) {};
        \node[node,label=left:$l$] at (2,1.5) {};
        \node[node,label=45:$l$] at (2.5,1.125) {};
        \node[node,label=right:$r$] at (3,0.75) {};
      \end{tikzpicture}
      &
      \begin{tikzpicture}[scale=0.5,xscale=1.35]
        \fill[black!16!white]
          (0,3) -- (1,3) -- (1,2.25) -- (2,2.25) -- (2,1.5) -- (2.5,1.5) --
          (2.5,0.75) -- (3,0.75) -- (3,0) -- (0,0) -- cycle;
        \draw[trap] (1,3) -- (1,2.25);
        \draw[trapedge] (1,2.25) -- (2,2.25);
        \draw[trapedge] (2,1.5) -- (2.5,1.5);
        \draw[trap] (2,2.25) -- (2,0.75);
        \draw[trap] (2.5,1.5) -- (2.5,0.75);
        \draw[trapedge] (2,0.75) -- (3,0.75);
        \draw[trap] (3,0.75) -- (3,0);
        \node[node,label=left:$l$] at (1,2.25) {};
        \node[inactivenode,label=left:$l$] at (2,1.5) {};
        \node[node,label=right:$r$] at (2.5,1.5) {};
        \node[node,label=right:$r$] at (3,0.75) {};
      \end{tikzpicture}
      &
      \begin{tikzpicture}[scale=0.5,xscale=1.35]
        \fill[black!16!white]
          (0,3) -- (1,3) -- (1,2.25) -- (2,2.25) -- (2,1.5) -- (3,1.5) --
          (3,0) -- (0,0) -- cycle;
        \draw[trap] (1,3) -- (1,2.25);
        \draw[trapedge] (1,2.25) -- (2,2.25);
        \draw[trap] (2,2.25) -- (2,0.75);
        \draw[trapedge] (2,1.5) -- (3,1.5);
        \draw[trap] (3,1.5) -- (3,0.75);
        \draw[trapedge] (2,0.75) -- (3,0.75);
        \draw[trap] (3,0.75) -- (3,0);
        \node[node,label=left:$l$] at (1,2.25) {};
        \node[node,label=left:$l$] at (2,1.5) {};
        \node[inactivenode,label=right:$r$] at (3,0.75) {};
      \end{tikzpicture}
      \\
    \addlinespace[0.25em]
    \bottomrule
  \end{tabular}
  \end{center}
  \caption{Constructing the trapezoidal diagram of a perfect matching. Vertices which are not active left or right endpoints are drawn in gray.}
  \label{tab:diagpmrec}
\end{table}

  We now have to show that, if $c$ is a balanced bracket expression, then each trapezoid can be placed in a coherent way.
  Assume thus that we have processed a certain prefix of $c$ already.
  Then, each left endpoint on the current boundary, except for those on a vertical of type (a) directly followed by a vertical of type (c) (as specified in Observation~\ref{obs:staircase}), is called an \emph{active left endpoint}.
  Similarly, each right endpoint on the current boundary, except for those on a vertical of type (b), is called an \emph{active right endpoint}.
  Let $m_1$, $m_2$ and $m_3$ be the respective numbers of occurrences of the brackets $\mlq\langle\mrq$, $\mlq|\mrq$ and $\mlq\rangle\mrq$ in the processed prefix of $c$.
  We claim that we maintain the following invariants.
  \begin{enumerate}
    \item[(I1)] 
      The number of active left endpoints on the boundary is equal to $m_1 - m_2$.
    \item[(I2)] 
      The number of active right endpoints on the boundary is equal to $m_2 - m_3$.
  \end{enumerate}
  
  These invariants are a consequence of the following observations:
  Adding a trapezoid of type (i) creates a new active left endpoint.
  Adding a trapezoid of type (ii) turns a formerly active left endpoint inactive, and it also creates a new active right endpoint.
  Adding a trapezoid of type (iii) turns a formerly active right endpoint inactive.
  Again, refer to Table \ref{tab:diagpmrec} for helpful illustrations.
 
  The above invariants guarantee that we never get stuck when constructing $\mathcal{D}$.
  Indeed, if the current bracket to be processed is $\mlq\langle\mrq$, then it is always possible to add a trapezoid of type (i).
  If the current bracket is $\mlq|\mrq$, then we can add a trapezoid of type (ii) only if there is an active left endpoint on the boundary, which is guaranteed by (I1) because the already processed prefix of $c$ satisfies $m_1 > m_2$ (because $c$ is a balanced bracket expression).
  If the current bracket is $\mlq\rangle\mrq$, then we can add a trapezoid of type (iii) only if there is an active right endpoint on the boundary, which is guaranteed by (I2) because the already processed prefix of $c$ satisfies $m_2 > m_3$.
   
  Also, by invariants (I1) and (I2), when the whole string $c$ has been processed we end up with a boundary that consists of a single vertical with one inactive right endpoint.
  The last trapezoid (i.e., the one that is unbounded to the right) can then be added in order to finish the construction of $\mathcal{D}$.
  %
\end{proof}

The proof of the next lemma is very similar to the preceding one.
Combining the lemma with Theorem \ref{thm:primeasym} and equation (\ref{eq:primegamma}) yields Theorem \ref{thm:trdiags}.

\begin{lemma}\label{lem:trdiags}
  For any $n = m + 2$ there is a bijection between $\trdiags$ and the set of prime balanced bracket expressions of dimension 3 and of size $m$.
\end{lemma}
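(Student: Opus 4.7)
The plan is to mirror the proof of Lemma~\ref{lem:pmdiags} and adapt it to accommodate the triangular convex hull and the added top arc. The base case $m=0$ (i.e., $n=2$) is immediate: there is a unique trapezoidal diagram, consisting of the single edge $\{s_1,s_2\}$, and it pairs with the unique prime balanced bracket expression of size $0$, namely the empty string.

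For $m \geq 1$, I would define maps in both directions analogous to those in Lemma~\ref{lem:pmdiags} and verify that they are mutually inverse. In the forward direction, I take a trapezoidal diagram $\mathcal{D}_T \in \trdiags$ and enumerate its trapezoids in canonical order. Exactly $3m$ of them participate in the encoding; each such trapezoid has a right boundary on which its attributed vertex sits either in the interior (type (i), bracket $\langle$), at the top (type (ii), bracket $|$), or at the bottom (type (iii), bracket $\rangle$), exactly as in the matching case. The counts balance because the invariants (I1) and (I2) on the numbers of active left and right endpoints of the staircase boundary carry over verbatim, so the output string is a balanced bracket expression of size $m$. In the inverse direction, I process a prime bracket expression one letter at a time, starting from an initial staircase boundary that already contains the bottom edge $\{s_1,s_n\}$, adding a trapezoid of the required type at each step exactly as in Table~\ref{tab:diagpmrec}, and finally closing off the construction by attaching the top arc from $s_1$ to $s_n$.

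The crux of the argument is the equivalence between \emph{primeness} of the bracket expression and the property of belonging to $\trdiags$. In the forward direction, a proper non-empty balanced substring would correspond to an intermediate stage of the canonical scan at which the staircase boundary of the already-processed trapezoids collapses to a single vertical with no active endpoints---geometrically, a self-contained sub-diagram fully detached from everything to its right. However, the triangular convex hull together with the added top arc encloses \emph{all} vertices inside a single lens bounded by $\{s_1,s_n\}$ below and the top arc above, which rules out any such premature closure; hence the output is prime. Conversely, primeness of the input $c$ ensures that no intermediate collapse occurs during the inverse construction, so the final configuration is a single lens with the expected internal trapezoidal structure.

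The main obstacle I anticipate is to verify that the diagram produced by the inverse map is actually realizable as the trapezoidal diagram of a \emph{geometric} triangulation, and not merely an abstract planar subdivision. Concretely, this reduces to two sub-claims: that every bounded face of the constructed structure other than the top ear is a triangle (which should follow by tracking the evolving shapes of the faces through the additions in Table~\ref{tab:diagpmrec}, combined with the primeness constraint on $c$), and that the resulting abstract visibility structure admits a straight-line realization on a suitable point set $S$ satisfying Definition~\ref{def:diagtr} (which should follow by an incremental left-to-right placement of the vertices at sufficiently generic heights). Combining this bijection with Theorem~\ref{thm:primeasym} and equation~(\ref{eq:primegamma}) then yields Theorem~\ref{thm:trdiags}.
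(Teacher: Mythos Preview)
Your proposal has a genuine gap in the encoding scheme. You assert that ``exactly $3m$'' trapezoids participate and that each carries a vertex on its \emph{right} boundary, classified as interior/top/bottom exactly as in the matching case. This is not what happens inside the lens of a triangulation diagram. Every bounded trapezoid there is a triangle with \emph{one} vertical side only, and there are $4m$ of them, not $3m$: each of the $m$ inner vertices $i$ is associated with four trapezoids, namely those bounded on the right by the extension below $i$, on the right by the extension above $i$, on the left by the extension below $i$, and on the left by the extension above $i$. Half of these have no right boundary at all, so your three types (i)--(iii) lifted verbatim from Lemma~\ref{lem:pmdiags} do not apply. The paper's fix is to use these four vertex-indexed types, observe that the two ``middle'' ones (right-above and left-below for the same $i$) always occur consecutively in canonical order, and encode them jointly as a single $\mlq|\mrq$; the substitution is then $\text{(i)}\mapsto\mlq\langle\mrq$, $\text{(ii,iii)}\mapsto\mlq|\mrq$, $\text{(iv)}\mapsto\mlq\rangle\mrq$, yielding a length-$3m$ string from $4m$ trapezoids. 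The inverse map correspondingly places one or \emph{two} trapezoids per letter.

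Two smaller points. First, your primeness argument (``premature collapse of the staircase'') is the right intuition but needs the concrete mechanism: in the paper the inverse construction is run on an \emph{arbitrary} balanced expression, and one observes that every balanced proper substring produces an unwanted double edge (beyond the mandated one on $\{1,n\}$); primeness is therefore exactly the condition that the output is a legal diagram in the sense of Definition~\ref{def:diagtr}. Second, for realizability you propose an ad hoc incremental placement; this is not obviously sufficient, and the paper instead invokes a known stretchability theorem for simple plane graphs drawn with $x$-monotone curves \cite[Theorem~4]{Kelly87}, applied after removing the upper copy of the double edge.
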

\begin{proof}
  Assume again that $m \geq 1$.
  We proceed by defining mappings in both directions which are clearly inverses of each other.
  
  \paragraph{From trapezoids to brackets.}
  Let $\mathcal{D} = \mathcal{D}_T$ be the trapezoidal diagram of a triangulation $T$ on a set of $n$ points, as specified in Definition \ref{def:diagtr}.
  We show how to construct the corresponding balanced bracket expression $c$ of size $m$.
  
  We start by enumerating the trapezoids in $\mathcal{D}$ in canonical order, where we only consider trapezoids that are enclosed by the double edge $\{1,n\}$.
  In other words, we ignore all four unbounded trapezoids.
  The reader should not be confused by the fact that all enumerated trapezoids have only one vertical boundary and hence look more like triangles.
  Further note that we get a sequence of $4m$ trapezoids in this way.
  Indeed, to each of the $m$ inner vertices $i \in [n] \setminus \{1,n\}$ we can attribute the following four trapezoids.
  \begin{enumerate}[(i)]
    \item 
      The trapezoid whose right boundary is the vertical extension below $i$.
    \item 
      The trapezoid whose right boundary is the vertical extension above $i$.
    \item 
      The trapezoid whose left boundary is the vertical extension below $i$.
    \item 
      The trapezoid whose left boundary is the vertical extension above $i$.
  \end{enumerate}
  
  As a consequence of Observation \ref{obs:staircase}, the trapezoids of type (ii) and (iii) attributed to a common vertex $i$ always appear consecutively in the order (ii), (iii).
  Therefore, similar to what we did in the proof of Lemma \ref{lem:pmdiags}, we construct $c$ by applying the substitution rules $\text{(i)} \mapsto \mlq\langle\mrq$, $\text{(ii),(iii)} \mapsto \mlq|\mrq$, $\text{(iv)} \mapsto\mlq\rangle\mrq$.
  Note that in the case of the second rule we effectively replace two trapezoids with one single bracket.
  Also, by Observation \ref{obs:staircase}, the four trapezoids attributed to a common vertex $i$ occur in the relative order (i), (ii), (iii), (iv), implying that $c$ is indeed a balanced bracket expression of size $m$.
  In the next paragraph we will further see that $c$ is prime.
  
  \paragraph{From brackets to trapezoids.}
  Let $c$ be an arbitrary balanced bracket expression of size $m$.
  For the time being, we do \emph{not} make the assumption that $c$ is prime.
  We will try (and gracefully fail) to construct the corresponding trapezoidal diagram $\mathcal{D} = \mathcal{D}_T$ of a triangulation $T$ on $n$ points.
  
  We start by drawing the two obvious initial unbounded trapezoids.
  We then iterate over $c$ and draw one or two trapezoids per letter in $c$.
  Depending on the brackets we make the following selections.

\newcommand{\epstr}{0.15}
\newcommand{\traptri}{
\begin{tikzpicture}[scale=0.45,baseline=(O.base)]
  \node (O) at (0,+2mm) {};
  \fill[black!16!white] (1,0) -- (1,1) -- (0,0.5) -- cycle;
  \draw[trapedge] (0,0.5) -- (1,0);
  \draw[trapedge] (0,0.5) -- (1,1);
  \draw[trap] (1,0) -- (1,1);
  \node[node] at (1,1) {};
\end{tikzpicture}
}
\newcommand{\traptrii}{
\begin{tikzpicture}[scale=0.45,baseline=(O.base)]
  \node (O) at (0,+2mm) {};
  \fill[black!16!white] (-0.5,0.75) -- (1.5,0.25) -- (0.5,0) -- (0.5,1) -- cycle;
  \draw[trapedge] (-0.5,0.75) -- (1.5,0.25);
  \draw[trapedge] (-0.5,0.75) -- (0.5,1);
  \draw[trapedge] (1.5,0.25) -- (0.5,0);
  \draw[trap] (0.5,1) -- (0.5,0.5);
  \draw[trap] (0.5,0) -- (0.5,0.5);
  \node[node] at (0.5,0.5) {};
\end{tikzpicture}
}
\newcommand{\traptriv}{
\begin{tikzpicture}[scale=0.45,baseline=(O.base)]
  \node (O) at (0,+2mm) {};
  \fill[black!16!white] (0,1) -- (1,0.5) -- (0,0) -- cycle;
  \draw[trapedge] (0,1) -- (1,0.5);
  \draw[trapedge] (0,0) -- (1,0.5);
  \draw[trap] (0,1) -- (0,0);
  \node[node] at (0,0) {};
\end{tikzpicture}
}
  \begin{enumerate}
    \item[(i)] 
      For $\mlq\langle\mrq$ we select\traptri\!, i.e., vertical on the right with vertex at the top.
    \item[(ii,iii)] 
      For $\mlq|\mrq$ we select\traptrii\!, i.e., a combination of two trapezoids.
    \item[(iv)] 
      For $\mlq\rangle\mrq$ we select\traptriv\!, i.e., vertical on the left with vertex at the bottom.
  \end{enumerate}
  
  As for the positioning of individual trapezoids, we do it again in the obvious way by trying to maintain the invariant that, after each step, the boundary has a stair-case shape as in Observation \ref{obs:staircase}.
  In fact, if we regard the addition of the two trapezoids of type (ii,iii) as one single step, then the boundary will never contain any verticals of type (b) (as specified in Observation \ref{obs:staircase}).
  Helpful illustrations can be seen in Table \ref{tab:diagtrrec}.
  
  \begin{table}
    \begin{center}
        \begin{tabular}{*4{>{\centering\arraybackslash}m{2.6cm}}}
    \toprule
      Before & \multicolumn{3}{c}{After} \\
      \addlinespace[0.25em]
      & $\mlq\langle\mrq$ & $\mlq|\mrq$ & $\mlq\rangle\mrq$ \\
    \cmidrule(lr){1-1}\cmidrule(lr){2-4}
    \addlinespace[0.25em]
      \begin{tikzpicture}[scale=0.45,xscale=1.1]
        \fill[black!16!white]
          (0,4) -- (1,4) -- (1,3.25) -- (2,3.25) --
          (2,2) -- (3,2) -- (3,0.75) -- (4,0.75) --
          (4,0) -- (0,0) -- cycle;
      \draw[trapedge] (1,3.25) -- (2,3.25);
      \draw[trapedge] (2,2) -- (3,2);
      \draw[trapedge] (3,0.75) -- (4,0.75);
      \draw[trap] (1,4) -- (1,3.25);
      \draw[trap] (2,3.25) -- (2,2);
      \draw[trap] (3,2) -- (3,0.75);
      \draw[trap] (4,0) -- (4,0.75);
      \node[node,label=left:$l$] at (1,3.25) {};
      \node[inactivenode,label=left:$l$] at (2,2) {};
      \node[inactivenode,label=right:$r$] at (3,2) {};
      \node[node,label=right:$r$] at (4,0.75) {};
      \end{tikzpicture}
      &
      \begin{tikzpicture}[scale=0.45,xscale=1.1]
        \fill[black!16!white]
          (0,4) -- (1,4) -- (1,3.25) -- (2,3.25) --
          (2,2) -- (2.5,3) -- (2.5,2) -- (3,2) -- (3,0.75) -- (4,0.75) --
          (4,0) -- (0,0) -- cycle;
      \draw[trapedge] (1,3.25) -- (2,3.25);
      \draw[trapedge] (2,2) -- (3,2);
      \draw[trapedge] (2,2) -- (2.5,3);
      \draw[trapedge] (3,0.75) -- (4,0.75);
      \draw[trap] (1,4) -- (1,3.25);
      \draw[trap] (2,3.25) -- (2,2);
      \draw[trap] (2.5,2) -- (2.5,3);
      \draw[trap] (3,2) -- (3,0.75);
      \draw[trap] (4,0) -- (4,0.75);
      \node[node,label=left:$l$] at (1,3.25) {};
      \node[inactivenode,label=left:$l$] at (2,2) {};
      \node[node,label=right:$r$] at (3,2) {};
      \node[node,label=right:$r$] at (4,0.75) {};
      \node[inactivenode,label=right:$r$] at (2.5,3) {};
      \end{tikzpicture}
      &
      \begin{tikzpicture}[scale=0.45,xscale=1.1]
        \fill[black!16!white]
          (0,4) -- (1,4) -- (1,3.25) -- (2,3.25) --
          (2,2) -- (3,3) -- (3,2) -- (4,0.75) --
          (4,0) -- (0,0) -- cycle;
      \draw[trapedge] (1,3.25) -- (2,3.25);
      \draw[trapedge] (2,2) -- (3,2);
      \draw[trapedge] (2,2) -- (3,3);
      \draw[trapedge] (3,2) -- (4,0.75);
      \draw[trapedge] (3,0.75) -- (4,0.75);
      \draw[trap] (1,4) -- (1,3.25);
      \draw[trap] (2,3.25) -- (2,2);
      \draw[trap] (3,3) -- (3,2);
      \draw[trap] (3,2) -- (3,0.75);
      \draw[trap] (4,0) -- (4,0.75);
      \node[node,label=left:$l$] at (1,3.25) {};
      \node[node,label=left:$l$] at (2,2) {};
      \node[inactivenode,label=right:$l$] at (3,2) {};
      \node[inactivenode,label=right:$r$] at (4,0.75) {};
      \end{tikzpicture}
      &
      \begin{tikzpicture}[scale=0.45,xscale=1.1]
        \fill[black!16!white]
          (0,4) -- (1,4) -- (1,3.25) --
          (2,3.25) to[out=0,in=100,looseness=1]
          (3,2) -- (3,0.75) -- (4,0.75) --
          (4,0) -- (0,0) -- cycle;
      \draw[trapedge] (1,3.25) -- (2,3.25);
      \draw[trapedge] (2,2) -- (3,2);
      \draw[trapedge] (3,0.75) -- (4,0.75);
      \draw[trapedge] (2,3.25) to[out=0,in=100,looseness=1] (3,2);
      \draw[trap] (1,4) -- (1,3.25);
      \draw[trap] (2,3.25) -- (2,2);
      \draw[trap] (3,2) -- (3,0.75);
      \draw[trap] (4,0) -- (4,0.75);
      \node[inactivenode,label=left:$l$] at (1,3.25) {};
      \node[inactivenode] at (2,2) {};
      \node[inactivenode,label=right:$r$] at (3,2) {};
      \node[node,label=right:$r$] at (4,0.75) {};
      \end{tikzpicture}
      \\
    \addlinespace[0.25em]
    \bottomrule
  \end{tabular}
    \end{center}
    \caption{Constructing the trapezoidal diagram of a triangulation. Vertices which are not active left or right endpoints are drawn in gray.}
    \label{tab:diagtrrec}
  \end{table}

  Assume now that we have processed a certain prefix of $c$ already.   
  Every vertex on a vertical of type (a), except for the right-most one, is called an \emph{active left endpoint}.
  Similarly, every vertex on a vertical of type (c), except for the left-most one, is called an \emph{active right endpoint}.
  For $m_1$, $m_2$ and $m_3$ as in the proof of Lemma \ref{lem:pmdiags},
  we claim that we maintain the following invariants.
  \begin{enumerate}
    \item[(I1)] 
      The number of active right endpoints on the boundary is equal to $m_1 - m_2$.
    \item[(I2)] 
      The number of active left endpoints on the boundary is equal to $m_2 - m_3$.
  \end{enumerate}
  
  These invariants once more follow from three simple observations:
  Adding a trapezoid of type (i) turns a formerly inactive right endpoint active.
  Adding a pair of trapezoids of type (ii,iii) turns a formerly active right endpoint inactive, and it also turns a formerly inactive left endpoint active.
  Adding a trapezoid of type (iv) turns a formerly active left endpoint inactive.
  
  The above invariants guarantee that we never get stuck when constructing $\mathcal{D}$, even if $c$ is not prime.
  Indeed, if the current bracket to be processed is $\mlq\langle\mrq$, then we can always add a trapezoid of type (i).
  If the current bracket is $\mlq|\mrq$, then we can add a trapezoid of type (ii,iii) only if there is an active right endpoint, which is guaranteed by (I1) because the already processed prefix of $c$ satisfies $m_1 > m_2$.
  If the current bracket is $\mlq\rangle\mrq$, then we can add a trapezoid of type (iv) only if there is an active left endpoint, which is guaranteed by (I2) because the already processed prefix of $c$ satisfies $m_2 > m_3$.
  
  Furthermore, when $c$ has been processed completely, invariants (I1) and (I2) imply that the boundary consists of a single edge and two unbounded verticals (in other words, the staircase consists of a single step).
  Hence, we can just add the two final unbounded trapezoids in order to finish the construction of $\mathcal{D}$.
  
  Figure \ref{fig:doubleedges} shows that not every balanced bracket expression $c$ is mapped to a valid trapezoidal diagram.
  It can happen that double edges are created.
  Recall that one double edge between vertices $1$ and $n$ is required, but any other double edge or even a triple edge between vertices $1$ and $n$ is not in accordance with Definition \ref{def:diagtr}.
  All the same, we now see that the above reconstruction procedure creates a double edge whenever it finishes processing a substring of $c$ that is itself a balanced bracket expression.
  Since the described mapping clearly computes the inverse of the mapping from the preceding paragraph, this also implies that all balanced bracket expressions produced by that first mapping are in fact prime, as claimed earlier.
  
\begin{figure}
  \begin{center}
          \begin{tikzpicture}
        \begin{scope}[scale=0.25,xshift=0]
           \node at (3,7) {$\langle|\langle\rangle|\rangle$};
           \node at (3,4) {$\downmapsto$};
           \coordinate (c1) at (0,0);
           \coordinate (c2) at (6,0);
           \coordinate (c3) at (2,-1);
           \coordinate (c4) at (4,1);
           \trapinit{2.5}{-2.5}{6}
           \trapedgelow{c1}{c2}{e1}
           \trapedge{c1}{c3}{e2}
           \trapedge{c3}{c2}{e3}
           \trapedge{c3}{c4}{e4}
           \trapedge{c1}{c4}{e5}
           \trapedge{c4}{c2}{e6}
           \trapedgehigh{c1}{c2}{e7}
           \trapbord{c3}{e5}{e1}
           \trapbord{c4}{e7}{e3}
           \node[node] at (c1) {};
           \node[node] at (c2) {};
           \node[node] at (c3) {};
           \node[node] at (c4) {};
         \end{scope}
         \begin{scope}[scale=0.25,xshift=280]
           \node at (3,7) {$\langle\langle||\rangle\rangle$};
           \node at (3,4) {$\downmapsto$};
           \coordinate (c1) at (0,0);
           \coordinate (c2) at (6,0);
           \coordinate (c3) at (2,1);
           \coordinate (c4) at (4,-1);
           \trapinit{2.5}{-2.5}{6}
           \trapedgelow{c1}{c2}{e1}
           \trapedge{c1}{c3}{e2}
           \trapedge{c3}{c2}{e3}
           \trapedge{c3}{c4}{e4}
           \trapedge{c1}{c4}{e5}
           \trapedge{c4}{c2}{e6}
           \trapedgehigh{c1}{c2}{e7}
           \trapbord{c3}{e7}{e5}
           \trapbord{c4}{e3}{e1}
           \node[node] at (c1) {};
           \node[node] at (c2) {};
           \node[node] at (c3) {};
           \node[node] at (c4) {};
         \end{scope}
         \begin{scope}[scale=0.25,xshift=560]
           \node at (3,7) {$\langle| \, \hls{$\langle|\rangle$} \, \rangle$};
           \node at (3,4) {$\downmapsto$};
           \coordinate (c1) at (0,0);
           \coordinate (c2) at (6,0);
           \coordinate (c3) at (2,0);
           \coordinate (c4) at (4,0);
           \fill[black!16!white] (c3) .. controls ++(1,-1.5) and ++(-1,-1.5) .. (c2) .. controls ++(-1,1.5) and ++(1,1.5) .. (c3);
           \trapinit{2.5}{-2.5}{6}
           \trapedgelow{c1}{c2}{e1}
           \trapedge{c1}{c3}{e2}
           \trapedgelows{c3}{c2}{e3}
           \trapedge{c3}{c4}{e4}
           \trapedge{c4}{c2}{e5}
           \trapedgehighs{c3}{c2}{e6}
           \trapedgehigh{c1}{c2}{e7}
           \trapbord{c3}{e7}{e1}
           \trapbord{c4}{e6}{e3}
           \node[node] at (c1) {};
           \node[node] at (c2) {};
           \node[node] at (c3) {};
           \node[node] at (c4) {};
         \end{scope}
         \begin{scope}[scale=0.25,xshift=840]
           \node at (3,7) {$\langle \, \hls{$\langle | \rangle$} \, |\rangle$};
           \node at (3,4) {$\downmapsto$};
           \coordinate (c1) at (0,0);
           \coordinate (c2) at (6,0);
           \coordinate (c3) at (2,0);
           \coordinate (c4) at (4,0);
           \fill[black!16!white] (c1) .. controls ++(1,-1.5) and ++(-1,-1.5) .. (c4) .. controls ++(-1,1.5) and ++(1,1.5) .. (c1);
           \trapinit{2.5}{-2.5}{6}
           \trapedgelow{c1}{c2}{e1}
           \trapedgelows{c1}{c4}{e2}
           \trapedge{c1}{c3}{e3}
           \trapedge{c3}{c4}{e4}
           \trapedgehighs{c1}{c4}{e5}
           \trapedge{c4}{c2}{e6}
           \trapedgehigh{c1}{c2}{e7}
           \trapbord{c3}{e5}{e2}
           \trapbord{c4}{e7}{e1}
           \node[node] at (c1) {};
           \node[node] at (c2) {};
           \node[node] at (c3) {};
           \node[node] at (c4) {};
         \end{scope}
         \begin{scope}[scale=0.25,xshift=1120]
           \node at (3,7) {$\langle|\rangle \, \hls{$\langle|\rangle$}$};
           \node at (3,4) {$\downmapsto$};
           \coordinate (c1) at (0,0);
           \coordinate (c2) at (6,0);
           \coordinate (c3) at (3,-1.2);
           \coordinate (c4) at (3,1.2);
           \fill[black!16!white] (c1) -- (c2) .. controls ++(-1,3) and ++(1,3) .. (c1);
           \trapinit{2.5}{-2.5}{6}
           \trapedgelow{c1}{c2}{e1}
           \trapedge{c1}{c3}{e2}
           \trapedge{c3}{c2}{e3}
           \trapedge{c1}{c2}{e4}
           \trapedge{c1}{c4}{e5}
           \trapedge{c4}{c2}{e6}
           \trapedgehigh{c1}{c2}{e7}
           \trapbord{c3}{e4}{e1}
           \trapbord{c4}{e7}{e4}
           \node[node] at (c1) {};
           \node[node] at (c2) {};
           \node[node] at (c3) {};
           \node[node] at (c4) {};
         \end{scope}
      \end{tikzpicture}
  \end{center}
  \caption{Substrings which are balanced bracket expressions lead to unwanted double edges.}
  \label{fig:doubleedges}
\end{figure}
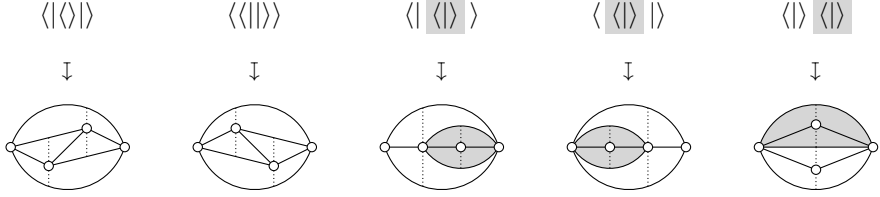
  
  Lastly, we face the problem of stretchability, namely that the produced drawing $\mathcal{D}$ might not correspond to the trapezoidal diagram $\mathcal{D}_T$ of an actual triangulation $T$ (with straight-line segments).
  However, it is known that any simple plane graph with edges drawn as non-crossing $x$-monotone curves can be stretched without changing edge orientations with respect to the $x$-axis \cite[Theorem 4]{Kelly87}.
  If $c$ is prime, it thus follows that also our drawing $\mathcal{D}$ is stretchable after removing the upper copy of the double edge $\{1,n\}$.
\end{proof}
\section{Embeddings of Trapezoidal Diagrams}\label{sec:embeddings}

Fix a trapezoidal diagram $\mathcal{D}$ with $n$ vertices and a set $S$ of $n$ points.
An \emph{embedding} of $\mathcal{D}$ on $S$ is a crossing-free geometric graph $G$ on $S$ with $\mathcal{D}_G = \mathcal{D}$.
Recall that for any family of crossing-free geometric graphs, the maximum number of such graphs on any set of $n$ points is equal to $\Theta^\ast(\delta^n)$ for some constant $\delta$.
If embeddings of $\mathcal{D}$ on every fixed $S$ had turned out to be unique, then Theorems \ref{thm:pmdiags} and \ref{thm:trdiags} would have implied the improved upper bounds $\delta \leq 5.196$ for perfect matchings and $\delta \leq 23.459$ for triangulations.
However, since embeddings are not unique in general, a natural follow-up question asks for the maximum number of embeddings.
While so far we did not succeed to obtain adequate upper bounds for these quantities, we can present two simple exponential lower bounds.

As already seen in Figure \ref{fig:diagpm}, there is a trapezoidal diagram of a perfect matching with $m=6$ edges that can be embedded in two different ways on a set of $n=12$ points.
By repeating that construction side by side as illustrated in Figure \ref{fig:pmmanyemb}, we get the following amplification.

\begin{figure}
  \begin{center}
      \begin{tikzpicture}[scale=0.6,xscale=1.1]
          \begin{scope}[xshift=0]
            \coordinate (a3)  at (1.1,0);
            \coordinate (a2)  at (2.9,0);
            \coordinate (a6)  at (1.4,0.75);
            \coordinate (a1)  at (2.6,0.75);
            \coordinate (a5)  at (0,1.5);
            \coordinate (a9)  at (0.8,1.5);
            \coordinate (a4)  at (3.2,1.5); 
            \coordinate (a8)  at (4,1.5);
            \coordinate (a12)  at (1.4,2.25);
            \coordinate (a7) at (2.6,2.25);
            \coordinate (a11) at (1.1,3);
            \coordinate (a10) at (2.9,3);
            \node[point] at (a1) {};
            \node[point] at (a2) {};
            \node[point] at (a3) {};
            \node[point] at (a4) {};
            \node[point] at (a5) {};
            \node[point] at (a6) {};
            \node[point] at (a7) {};
            \node[point] at (a9) {};
            \node[point] at (a10) {};
            \node[point] at (a11) {};
            \node[point] at (a12) {};
          \end{scope}
          \begin{scope}[xshift=100]
            \coordinate (b1)  at (1.1,0);
            \coordinate (b4)  at (2.9,0);
            \coordinate (b2)  at (1.4,0.75);
            \coordinate (b7)  at (2.6,0.75);
            \coordinate (b5)  at (0,1.5);
            \coordinate (b3)  at (0.8,1.5);
            \coordinate (b10)  at (3.2,1.5); 
            \coordinate (b8)  at (4,1.5);
            \coordinate (b6)  at (1.4,2.25);
            \coordinate (b11) at (2.6,2.25);
            \coordinate (b9) at (1.1,3);
            \coordinate (b12) at (2.9,3);
            \node[point] at (b1) {};
            \node[point] at (b2) {};
            \node[point] at (b3) {};
            \node[point] at (b4) {};
            \node[point] at (b6) {};
            \node[point] at (b7) {};
            \node[point] at (b9) {};
            \node[point] at (b10) {};
            \node[point] at (b11) {};
            \node[point] at (b12) {};
          \end{scope}
          \begin{scope}[xshift=200]
            \coordinate (c1)  at (1.1,0);
            \coordinate (c4)  at (2.9,0);
            \coordinate (c2)  at (1.4,0.75);
            \coordinate (c7)  at (2.6,0.75);
            \coordinate (c5)  at (0,1.5);
            \coordinate (c3)  at (0.8,1.5);
            \coordinate (c10)  at (3.2,1.5); 
            \coordinate (c8)  at (4,1.5);
            \coordinate (c6)  at (1.4,2.25);
            \coordinate (c11) at (2.6,2.25);
            \coordinate (c9) at (1.1,3);
            \coordinate (c12) at (2.9,3);
            \node[point] at (c1) {};
            \node[point] at (c2) {};
            \node[point] at (c3) {};
            \node[point] at (c4) {};
            \node[point] at (c6) {};
            \node[point] at (c7) {};
            \node[point] at (c9) {};
            \node[point] at (c10) {};
            \node[point] at (c11) {};
            \node[point] at (c12) {};
          \end{scope}
          \begin{scope}[xshift=300]
            \coordinate (d3)  at (1.1,0);
            \coordinate (d2)  at (2.9,0);
            \coordinate (d6)  at (1.4,0.75);
            \coordinate (d1)  at (2.6,0.75);
            \coordinate (d5)  at (0,1.5);
            \coordinate (d9)  at (0.8,1.5);
            \coordinate (d4)  at (3.2,1.5); 
            \coordinate (d8)  at (4,1.5);
            \coordinate (d12)  at (1.4,2.25);
            \coordinate (d7) at (2.6,2.25);
            \coordinate (d11) at (1.1,3);
            \coordinate (d10) at (2.9,3);
            \node[point] at (d1) {};
            \node[point] at (d2) {};
            \node[point] at (d3) {};
            \node[point] at (d4) {};
            \node[point] at (d6) {};
            \node[point] at (d7) {};
            \node[point] at (d9) {};
            \node[point] at (d10) {};
            \node[point] at (d11) {};
            \node[point] at (d12) {};
          \end{scope}
          \begin{scope}[xshift=400]
            \coordinate (e3)  at (1.1,0);
            \coordinate (e2)  at (2.9,0);
            \coordinate (e6)  at (1.4,0.75);
            \coordinate (e1)  at (2.6,0.75);
            \coordinate (e5)  at (0,1.5);
            \coordinate (e9)  at (0.8,1.5);
            \coordinate (e4)  at (3.2,1.5); 
            \coordinate (e8)  at (4,1.5);
            \coordinate (e12)  at (1.4,2.25);
            \coordinate (e7) at (2.6,2.25);
            \coordinate (e11) at (1.1,3);
            \coordinate (e10) at (2.9,3);
            \node[point] at (e1) {};
            \node[point] at (e2) {};
            \node[point] at (e3) {};
            \node[point] at (e4) {};
            \node[point] at (e6) {};
            \node[point] at (e7) {};
            \node[point] at (e8) {};
            \node[point] at (e9) {};
            \node[point] at (e10) {};
            \node[point] at (e11) {};
            \node[point] at (e12) {};
          \end{scope}
            \trapinit{3.35}{-0.35}{18.1}
            \trapedge{a1}{a2}{aa1}
            \trapedge{a3}{a4}{aa2}
            \trapedge{a5}{a6}{aa3}
            \trapedge{a7}{b6}{aa4}
            \trapedge{a9}{a10}{aa5}
            \trapedge{a11}{a12}{aa6}
            \trapbord{a1}{aa2}{floor};
            \trapbord{a2}{aa2}{floor};
            \trapbord{a3}{aa3}{floor};
            \trapbord{a4}{aa4}{floor};
            \trapbord{a5}{ceil}{floor};
            \trapbord{a6}{aa5}{aa2};
            \trapbord{a7}{aa5}{aa2};
            \trapbord{a9}{ceil}{aa3};
            \trapbord{a10}{ceil}{aa4};
            \trapbord{a11}{ceil}{aa5};
            \trapbord{a12}{ceil}{aa5};
            
            \trapedge{b1}{b2}{bb1}
            \trapedge{b3}{b4}{bb2}
            \trapedge{a7}{b6}{bb3}
            \trapedge{b7}{c6}{bb4}
            \trapedge{b9}{b10}{bb5}
            \trapedge{b11}{b12}{bb6}
            \trapbord{b1}{bb2}{floor};
            \trapbord{b2}{bb2}{floor};
            \trapbord{b3}{bb3}{floor};
            \trapbord{b4}{bb4}{floor};
            \trapbord{b6}{bb5}{bb2};
            \trapbord{b7}{bb5}{bb2};
            \trapbord{b9}{ceil}{bb3};
            \trapbord{b10}{ceil}{bb4};
            \trapbord{b11}{ceil}{bb5};
            \trapbord{b12}{ceil}{bb5};
            
            \trapedge{c1}{c2}{cc1}
            \trapedge{c3}{c4}{cc2}
            \trapedge{b7}{c6}{cc3}
            \trapedge{c7}{d6}{cc4}
            \trapedge{c9}{c10}{cc5}
            \trapedge{c11}{c12}{cc6}
            \trapbord{c1}{cc2}{floor};
            \trapbord{c2}{cc2}{floor};
            \trapbord{c3}{cc3}{floor};
            \trapbord{c4}{cc4}{floor};
            \trapbord{c6}{cc5}{cc2};
            \trapbord{c7}{cc5}{cc2};
            \trapbord{c9}{ceil}{cc3};
            \trapbord{c10}{ceil}{cc4};
            \trapbord{c11}{ceil}{cc5};
            \trapbord{c12}{ceil}{cc5};
            
            \trapedge{d1}{d2}{dd1}
            \trapedge{d3}{d4}{dd2}
            \trapedge{c7}{d6}{dd3}
            \trapedge{d7}{e6}{dd4}
            \trapedge{d9}{d10}{dd5}
            \trapedge{d11}{d12}{dd6}
            \trapbord{d1}{dd2}{floor};
            \trapbord{d2}{dd2}{floor};
            \trapbord{d3}{dd3}{floor};
            \trapbord{d4}{dd4}{floor};
            \trapbord{d6}{dd5}{dd2};
            \trapbord{d7}{dd5}{dd2};
            \trapbord{d9}{ceil}{dd3};
            \trapbord{d10}{ceil}{dd4};
            \trapbord{d11}{ceil}{dd5};
            \trapbord{d12}{ceil}{dd5};
            
            \trapedge{e1}{e2}{ee1}
            \trapedge{e3}{e4}{ee2}
            \trapedge{d7}{e6}{ee3}
            \trapedge{e7}{e8}{ee4}
            \trapedge{e9}{e10}{ee5}
            \trapedge{e11}{e12}{ee6}
            \trapbord{e1}{ee2}{floor};
            \trapbord{e2}{ee2}{floor};
            \trapbord{e3}{ee3}{floor};
            \trapbord{e4}{ee4}{floor};
            \trapbord{e6}{ee5}{ee2};
            \trapbord{e7}{ee5}{ee2};
            \trapbord{e8}{ceil}{floor};
            \trapbord{e9}{ceil}{ee3};
            \trapbord{e10}{ceil}{ee4};
            \trapbord{e11}{ceil}{ee5};
            \trapbord{e12}{ceil}{ee5};
  \end{tikzpicture}
  \end{center}
  \caption{A point set where a large number of distinct perfect matchings have the same trapezoidal diagram. Only one of $2^5 = 32$ such perfect matchings is shown.}
  \label{fig:pmmanyemb}
\end{figure}
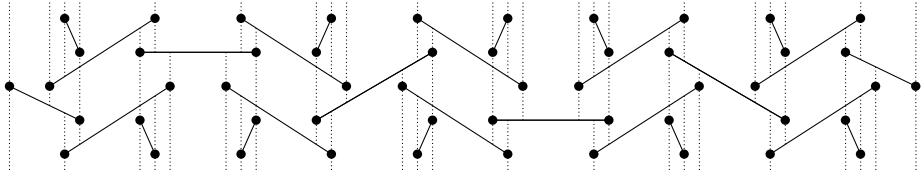

\begin{theorem}
  For any $k$ there exists a planar point set $S_k$ of size $n = 10k + 2$ and a diagram $\mathcal{D} \in \pmdiags$ with $2^k = \Omega(1.071^n)$ distinct embeddings on $S_k$.
\end{theorem}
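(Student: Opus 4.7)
The plan is to amplify the gadget already shown on the right side of Figure \ref{fig:diagpm}: a point set of size $12$ supporting two distinct perfect matchings whose trapezoidal diagrams coincide. Since both matchings induce exactly the same vertical visibility relations, we may interpret this gadget as an elementary $2$-choice block, and we build $S_k$ by concatenating $k$ such blocks left-to-right so that consecutive blocks share the two ``middle height'' points that serve as the boundary of both a matching edge on the left and a matching edge on the right.

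More precisely, I would describe $S_k$ exactly as in Figure \ref{fig:pmmanyemb}: start with the $12$ points of one gadget, then repeatedly append $10$ fresh points positioned so that the rightmost two points of the previous block together with the next $10$ points form another congruent copy of the basic gadget. The total is $n = 12 + 10(k-1) = 10k+2$, which matches the statement. For each block $j \in [k]$ independently, I would pick one of the two matchings on its $12$ points; because consecutive blocks share exactly the two endpoints of an edge that appears identically in both choices of each block, the local choices glue together to a perfect matching on all of $S_k$. This gives $2^k$ matchings $M$, one per choice vector in $\{0,1\}^k$.

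The key step is to verify that all $2^k$ matchings produce the same trapezoidal diagram $\mathcal{D} \in \pmdiags$. I would do this by inspecting the basic gadget and checking that the two local matchings realize exactly the same upward and downward visibility relations at every one of the $12$ points involved; this is already implicit in Figure \ref{fig:diagpm}. At the interface between two adjacent blocks, the two shared vertices lie on a common matching edge in both blocks, and the upward/downward extensions from any point outside a block are blocked by edges in the same way for either of the two choices inside that block, so swapping the local choice in one block does not affect the visibilities in neighbouring blocks. Thus $\mathcal{D}_M$ is independent of the choice vector, and all $2^k$ matchings share a single diagram $\mathcal{D}$.

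Finally, the asymptotic bound comes from $2^k = 2^{(n-2)/10} = \Omega\bigl((2^{1/10})^n\bigr)$ and $2^{1/10} \approx 1.0718 \geq 1.071$. The main obstacle is the visibility bookkeeping at the interfaces; once one is convinced (most easily by staring at Figure \ref{fig:pmmanyemb}) that the two local choices are truly indistinguishable from outside their block, the counting is immediate.
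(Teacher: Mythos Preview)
Your approach is the same as the paper's: lay $k$ copies of the gadget side by side as in Figure~\ref{fig:pmmanyemb}, obtain $n=10k+2$ points, and argue that the $2^k$ independent binary choices per block all realize the same trapezoidal diagram. One inaccuracy worth fixing: the two matchings on the $12$-point gadget share \emph{no} edge, so your description of consecutive blocks as ``sharing the two endpoints of an edge that appears identically in both choices'' does not match the actual construction in Figure~\ref{fig:pmmanyemb} (there the extreme points of interior blocks are dropped and the incident matching edges are rerouted across the block boundary); since you explicitly defer to that figure for the precise layout, the overall argument still stands.
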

\begin{proof}
  To construct $S_k$, put $k$ \emph{blocks}, each one consisting of a copy of the point set from Figure \ref{fig:diagpm}, side by side, but draw the respective left-most and right-most points only once, as exemplified in Figure \ref{fig:pmmanyemb} for $k=5$.
  In this way we get $10$ points per block and $2$ extra points, giving a total of $10k + 2$ points.
  The diagram $\mathcal{D}$ is chosen as a natural extension of the one seen in Figure \ref{fig:diagpm}.
  Observe now that for each block we can choose two distinct ways to embed the corresponding part of $\mathcal{D}$.
  Furthermore, these binary choices can be made independently, implying the desired number of $2^k$ embeddings.
\end{proof}

For triangulations we present an analogous construction.
It is based on the point set depicted in Figure \ref{fig:trmanyemb}, which is an adaptation of a point set taken from \cite{AS13} due to Günter Rote.
Originally, it was used to show that embeddings of upward triangulations on a given point set are not always unique.

\begin{theorem}\label{thm:manyisotrdiags}
  For any $k$ there exists a planar point set $S_k$ of size $n = 12k^2 + 4k + 3$ and a diagram $\mathcal{D} \in \trdiags$ with $2^{k^2} = \Omega(1.059^n)$ distinct embeddings on $S_k$.
\end{theorem}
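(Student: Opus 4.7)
The plan is to generalize the perfect-matching construction in a two-dimensional fashion, by tiling $k^2$ copies of the base gadget from Figure \ref{fig:trmanyemb} (the Rote block) in a $k \times k$ grid pattern and verifying that each copy contributes an independent binary choice of embedding.

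First I would isolate the single-block property inherited from the construction of \cite{AS13}: the Rote block is a small point configuration admitting exactly two triangulations which have the same trapezoidal diagram, the two triangulations differing by a local flip whose effect is entirely confined to the block. In particular, the abstract edge set of the block's triangulation up to this flip is invariant, and all vertical visibilities between the block and its surroundings are routed through edges that remain unchanged by the flip. I would then specify the grid layout: $k^2$ copies of the block sharing boundary rows and columns of vertices, together with an outer boundary of order $O(k)$ vertices and the three apex vertices required by Definition \ref{def:diagtr} so that $S_k$ has a triangular convex hull. A direct point count, with $12$ private vertices per block plus the shared boundary structure, yields $n = 12k^2 + 4k + 3$. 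The diagram $\mathcal{D}$ is taken to be the trapezoidal diagram of any single triangulation in which every block sits in one of its two allowable states.

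The core step is to prove that the choices at the $k^2$ blocks are mutually independent, in the sense that flipping the triangulation inside any subset of blocks produces a new triangulation of $S_k$ with the \emph{same} trapezoidal diagram $\mathcal{D}$. This reduces to two checks inside one block: (i) the abstract (oriented) edge set is preserved by the flip, which is immediate from locality, and (ii) every vertical extension above or below a vertex, whether inside the block or in a neighbouring block, hits the same edge in both versions. Condition (ii) is the content of Rote's gadget design, which provides ``guard'' edges that screen all external vertical visibilities regardless of the flip. Once (i) and (ii) hold for a single block, independence across the $2^{k^2}$ combinations follows because distinct blocks occupy disjoint horizontal strips, so their guards do not interact.

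The main obstacle is precisely the verification of condition (ii): one has to track, vertex by vertex within a single block and for every vertex in its immediate surroundings, the edge hit by each vertical extension in both flip states, and confirm invariance. This is a finite local case analysis, already implicit in Rote's original example, and once carried out the global claim follows. Having established $2^{k^2}$ distinct embeddings, the announced lower bound is arithmetic: $2^{k^2} = 2^{(n - 4k - 3)/12} = \Omega(2^{n/12}) = \Omega(1.059^n)$, since $2^{1/12} \approx 1.0595$.
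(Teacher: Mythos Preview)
Your overall approach coincides with the paper's: tile $k^2$ copies of Rote's block, add three hull points, and multiply the independent binary choices. Two points in your execution diverge from the paper and the second is an actual gap.

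First, the layout. Rote's block is a hexagon (six outer points, six intermediate points, four inner points, sixteen in total), so the paper arranges the $k^2$ blocks in a \emph{honeycomb} grid with extreme points of adjacent hexagons coinciding, not in a rectangular grid with shared ``rows and columns''. In the honeycomb, each interior corner is shared by three hexagons, so a block effectively contributes $10 + 6\cdot\tfrac13 = 12$ vertices, and the boundary contributes the linear term $4k$; this is where $12k^2+4k+3$ actually comes from. A rectangular grid of hexagonal blocks does not glue cleanly and will not give this count.

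Second, and more seriously, your justification for independence---``distinct blocks occupy disjoint horizontal strips, so their guards do not interact''---is false for any two-dimensional $k\times k$ arrangement (blocks in the same row share horizontal extent, and in the honeycomb every block overlaps horizontally with several neighbours). Independence cannot be derived from horizontal separation. What makes the argument go through is purely local: the flip changes only edges strictly inside a single hexagon, and the boundary edges of that hexagon (which are untouched) already intercept every vertical extension entering or leaving it, so no extension from outside the block ever reaches a flipped edge. That is exactly your condition~(ii), but it must be invoked for blocks that \emph{do} overlap horizontally, not sidestepped by a disjointness claim that does not hold.
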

\begin{proof}
  Define a \emph{block} as a copy of the point set depicted in Figure \ref{fig:trmanyemb}.
  Arrange $k^2$ such blocks in a honey comb grid, where extreme points of individual blocks may coincide with extreme points of neighboring blocks.
  Place three additional points such that $S_k$ has a triangular convex hull.
  It can be checked that this gives a total of $12k^{2} + 4k + 3$ points.
  The diagram $\mathcal{D}$ is chosen accordingly as a honey comb grid consisting of $k^2$ copies of the diagram depicted in Figure \ref{fig:trmanyemb} and some extra edges for connecting the hull vertices.
  The desired number of embeddings again follows after observing that we have independent binary choices for embedding individual blocks.
\end{proof}

\begin{figure}[h]
  \begin{center}
      \begin{tikzpicture}[xscale=1.1]
    \newcommand{\pointsetout}[6]{
      \begin{scope}[yscale=0.6]
      \coordinate (#1) at (180:2.5);
      \coordinate (#2) at (120:2.5);
      \coordinate (#3) at (240:2.5);
      \coordinate (#4) at (60:2.5);
      \coordinate (#5) at (300:2.5);
      \coordinate (#6) at (0:2.5);
      \node[point] at (#1) {};
      \node[point] at (#2) {};
      \node[point] at (#3) {};
      \node[point] at (#4) {};
      \node[point] at (#5) {};
      \node[point] at (#6) {};
      \end{scope}
    }
    \newcommand{\pointsetmid}[6]{
      \begin{scope}[yscale=0.9]
      \coordinate (#1) at (180:1);
      \coordinate (#2) at (120:1);
      \coordinate (#3) at (240:1);
      \coordinate (#4) at (60:1);
      \coordinate (#5) at (300:1);
      \coordinate (#6) at (0:1);
      \node[point] at (#1) {};
      \node[point] at (#2) {};
      \node[point] at (#3) {};
      \node[point] at (#4) {};
      \node[point] at (#5) {};
      \node[point] at (#6) {};
      \end{scope}
    }
    \newcommand{\pointsetinn}[4]{
      \coordinate (#1) at (-0.3,-0.25);
      \coordinate (#2) at (-0.2,0.25);
      \coordinate (#3) at (0.2,-0.25);
      \coordinate (#4) at (0.3,0.25);
      \node[point] at (#1) {};
      \node[point] at (#2) {};
      \node[point] at (#3) {};
      \node[point] at (#4) {};
    }
    \newcommand{\edges}{
      \trapedge{c1}{c2}{e12}
      \trapedge{c1}{c5}{e15}
      \trapedge{c1}{c8}{e18}
      \trapedge{c1}{c4}{e14}
      \trapedge{c1}{c3}{e13}
      \trapedge{c2}{c14}{e214}
      \trapedge{c2}{c11}{e211}
      \trapedge{c2}{c5}{e25}
      \trapedge{c3}{c4}{e34}
      \trapedge{c3}{c6}{e36}
      \trapedge{c3}{c15}{e315}
      \trapedge{c4}{c8}{e48}
      \trapedge{c4}{c9}{e49}
      \trapedge{c4}{c7}{e47}
      \trapedge{c4}{c12}{e412}
      \trapedge{c4}{c6}{e46}
      \trapedge{c5}{c11}{e511}
      \trapedge{c5}{c13}{e513}
      \trapedge{c5}{c10}{e510}
      \trapedge{c5}{c8}{e58}
      \trapedge{c6}{c12}{e612}
      \trapedge{c6}{c15}{e615}
      \trapedge{c7}{c9}{e79}
      \trapedge{c7}{c12}{e712}
      \trapedge{c8}{c10}{e810}
      \trapedge{c8}{c13}{e813}
      \trapedge{c8}{c9}{e89}
      \trapedge{c9}{c13}{e913}
      \trapedge{c9}{c16}{e916}
      \trapedge{c9}{c12}{e912}
      \trapedge{c10}{c13}{e1013}
      \trapedge{c11}{c14}{e1114}
      \trapedge{c11}{c13}{e1113}
      \trapedge{c12}{c16}{e1216}
      \trapedge{c12}{c15}{e1215}
      \trapedge{c13}{c14}{e1314}
      \trapedge{c13}{c16}{e1316}
      \trapedge{c14}{c16}{e1416}
      \trapedge{c15}{c16}{e1516}
    }
    \newcommand{\traps}{
      \trapinit{1.3}{-1.3}{2.5}
      \trapbordlow{c2}{e15}
      \trapbordhigh{c3}{e14}
      \trapbord{c4}{e18}{e36}
      \trapbord{c5}{e211}{e18}
      \trapbord{c6}{e412}{e315}
      \trapbord{c7}{e49}{e412}
      \trapbord{c8}{e510}{e49}
      \trapbord{c9}{e813}{e712}
      \trapbord{c10}{e513}{e813}
      \trapbord{c11}{e214}{e513}
      \trapbord{c12}{e916}{e615}
      \trapbord{c13}{e1114}{e916}
      \trapbordlow{c14}{e1316}
      \trapbordhigh{c15}{e1216}
    }
    
    \begin{scope}
      \pointsetout{c1}{c2}{c3}{c14}{c15}{c16}
      \pointsetmid{c4}{c5}{c6}{c11}{c12}{c13}
      \pointsetinn{c7}{c8}{c9}{c10}
      \edges
      \traps
    \end{scope}
    \begin{scope}[xshift=160]
      \pointsetout{c1}{c2}{c3}{c14}{c15}{c16}
      \pointsetmid{c5}{c11}{c4}{c13}{c6}{c12}
      \pointsetinn{c8}{c10}{c7}{c9}
      \edges
      \traps
    \end{scope}
  \end{tikzpicture}
  \end{center}
  \caption{A point set and two triangulations which have the same trapezoidal diagram.}\label{fig:trmanyemb}
\end{figure}
\section{Prime Catalan Numbers}\label{sec:primecatalan}

In this final section we present all ingredients that are required to prove Theorem~\ref{thm:primeasym} from the introduction.
Furthermore, we show how to compute prime Catalan numbers efficiently.

In order to make notation less cumbersome when dealing with (prime) Catalan numbers of arbitrary dimension, we omit writing the superscripts $(d)$, but we always keep in mind the dependency on $d$.
That is, we write $C_m = C^{(d)}_m$ and $P_m = P^{(d)}_m$.
We further define the ordinary generating functions
\begin{align}
  C(x) = \sum_{m=0}^\infty C_m x^m\text{,} && P(x) = \sum_{m=0}^\infty P_m x^m\text{.}
\end{align}

We will be using a fundamental result of complex function theory called the Lagrange inversion formula.
In its classic form, it gives the Taylor expansion of the inverse of an analytic function at a point where the first derivative does not vanish.
In combinatorics, the following formulation is often most useful \cite[Theorem A.2]{FS09}.
\begin{theorem}[Lagrange Inversion]\label{thm:lagrange}
  Let $A(x) = \sum_{m=0}^\infty A_mx^m$ be a formal power series satisfying $A_0 \neq 0$. Define $Z(x) = \frac{x}{A(x)}$.
  Then, there exists a unique compositional inverse of Z(x), i.e., a unique formal power series $X(z) = \sum_{m=0}^\infty X_m z^m$ with $Z(X(z)) = z$.
  Moreover, the coefficients of $X(z)$ and $X(z)^k$ are given by
  \begin{align*}
    [z^m]X(z) = \frac{1}{m} [x^{m-1}] A(x)^m, && [z^m]X(z)^k = \frac{k}{m} [x^{m-k}] A(x)^m \text{.}
  \end{align*}
\end{theorem}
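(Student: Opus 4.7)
The proof proceeds in two stages: first establishing existence and uniqueness of the compositional inverse $X(z)$, then deriving the coefficient formula via formal residue calculus.

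For existence and uniqueness, since $A_0 \neq 0$ the series $Z(x) = x/A(x)$ satisfies $Z(0) = 0$ and $[x^1]Z(x) = 1/A_0 \neq 0$. Writing $X(z) = \sum_{m \geq 0} X_m z^m$, the equation $Z(X(z)) = z$ forces $X_0 = 0$, and comparing coefficients of $z^m$ for $m \geq 1$ yields a relation of the form $(1/A_0) X_m + R_m(X_1,\dots,X_{m-1}) = \delta_{m,1}$, which recursively pins down each $X_m$ uniquely. Differentiating $Z(X(z)) = z$ gives $Z'(X(z)) X'(z) = 1$, so $X'(0) \neq 0$; running the same recursive argument for $X$ in place of $Z$ produces its own right inverse, which coincides with $Z$ by a standard argument, so $X(Z(x)) = x$ as well.

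For the coefficient formula, introduce the formal residue $\mathrm{res}[f] := [x^{-1}]f(x)$ on the field of formal Laurent series. I would establish two key lemmas. First, \emph{integration by parts}: $\mathrm{res}[f'(x) g(x)] = -\mathrm{res}[f(x) g'(x)]$, which follows because $(fg)' = f'g + fg'$ and the residue of any derivative is zero (the coefficient of $x^{-1}$ in $\sum n a_n x^{n-1}$ is $0 \cdot a_0 = 0$). Second, \emph{change of variables}: whenever $Z(0)=0$ and $Z'(0)\neq 0$, one has $\mathrm{res}_z[f(z)] = \mathrm{res}_x[f(Z(x)) Z'(x)]$ for every formal Laurent series $f$; by linearity it suffices to check this on monomials $f(z) = z^n$, where $\mathrm{res}_x[Z(x)^n Z'(x)]$ equals $\mathrm{res}_x[\frac{1}{n+1}(Z(x)^{n+1})'] = 0$ for $n \neq -1$ (derivatives have no residue) and equals $1$ for $n=-1$ (because $Z(x)/x$ is a unit, so $Z'/Z$ has residue $1$).

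With these tools in hand, for $m \geq 1$ compute
\begin{align*}
  [z^m]X(z)^k
  &= \mathrm{res}_z\bigl[z^{-m-1} X(z)^k\bigr]
   = \mathrm{res}_z\!\left[\tfrac{1}{m} z^{-m} \bigl(X(z)^k\bigr)'\right] \\
  &= \mathrm{res}_z\!\left[\tfrac{k}{m} z^{-m} X(z)^{k-1} X'(z)\right],
\end{align*}
using integration by parts in the second equality. Now substitute $z = Z(x)$ via the change-of-variables lemma; using $X(Z(x)) = x$ together with the chain rule identity $X'(Z(x)) Z'(x) = 1$, this residue becomes
\[
  \mathrm{res}_x\!\left[\tfrac{k}{m}\, Z(x)^{-m} x^{k-1}\right]
  = \mathrm{res}_x\!\left[\tfrac{k}{m}\, \frac{A(x)^m}{x^m}\, x^{k-1}\right]
  = \tfrac{k}{m}\,[x^{m-k}] A(x)^m,
\]
which is the second claimed formula; the first is just the $k=1$ specialization. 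The case $m=0$ is trivial ($X(0)^k = 0$ for $k \geq 1$, $X(0)^0 = 1$), and the derivation is valid for all $k \geq 0$ with the convention that $k=0$ annihilates the right-hand side. The main delicate point I expect is the change-of-variables lemma: one must ensure that every Laurent series manipulation takes place inside a field of formal Laurent series where $Z(x)^{-m}$ is well-defined, which works because $Z(x) = x \cdot (\text{unit in } \mathbb{K}[[x]])$; and one must justify that substitution of a series without constant term into a Laurent series yields a well-defined Laurent series, which again follows from $Z$ having order exactly $1$.
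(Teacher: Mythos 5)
Your argument is correct. Note, however, that the paper itself does not prove this statement: Theorem~\ref{thm:lagrange} is invoked as a known result, with an explicit citation to Flajolet and Sedgewick (Theorem~A.2 of \cite{FS09}), so there is no ``paper proof'' to compare against. What you have supplied is a complete, self-contained proof by the standard formal-residue method, and it is sound: the recursive determination of the $X_m$ from $Z(X(z))=z$ establishes existence and uniqueness; the two residue lemmas (vanishing residue of a derivative, hence integration by parts, and the change-of-variables identity $\mathrm{res}_z[f]=\mathrm{res}_x[f(Z)Z']$ verified on monomials, with the $n=-1$ case handled via $Z'/Z = 1/x + u'/u$ for $Z = xu$ with $u$ a unit) are both correctly stated and justified; and the chain of equalities from $[z^m]X^k$ to $\frac{k}{m}[x^{m-k}]A^m$ is valid, with the $m\geq 1$ restriction correctly noted and the degenerate cases $m=0$ and $k=0$ correctly disposed of. The one point you could tighten slightly is the remark that ``$Z(X(z))=z$ forces $X_0=0$'': strictly, the composition $Z\circ X$ is only defined as a formal power series when $X_0=0$, so this is a precondition for the equation to make sense rather than a consequence of it, but this does not affect the substance. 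Since this is the classical residue proof found in standard references, you could equally have discharged the obligation by citation as the paper does; your write-up is nonetheless a faithful and careful rendering of that proof.
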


We will also be using a multiplicative variant of Fekete's lemma.
For a proof we refer the reader to \cite[Lemma 11.6]{LW01}.
\begin{theorem}[Fekete's Lemma]\label{thm:fekete}
  Let $A_0,A_1,A_2,\dots$ be a sequence of non-negative real numbers such that $A_{m+n} \geq A_m \cdot A_n$ holds for all $m,n$.
  Then,
  \begin{equation*}
    \lim_{m \rightarrow \infty} \sqrt[m]{A_m} = \limsup_{m \rightarrow \infty} \sqrt[m]{A_m}.
  \end{equation*}
  In particular, the limit exists unless it diverges to infinity.
\end{theorem}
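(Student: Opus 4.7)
The plan is to reduce this multiplicative statement to the classical additive form of Fekete's lemma by taking logarithms. Specifically, set $c_m := \log A_m \in [-\infty, +\infty)$, so that the hypothesis $A_{m+n} \geq A_m A_n$ becomes superadditivity $c_{m+n} \geq c_m + c_n$ (using $\log 0 = -\infty$ and the natural extended-real arithmetic). Since $\sqrt[m]{A_m} = \exp(c_m/m)$ and $\exp$ is continuous and monotone, the desired conclusion is equivalent to $\lim_{m\to\infty} c_m/m = \sup_{m \geq 1} c_m/m$.

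Writing $L := \sup_{m \geq 1} c_m/m \in [-\infty,+\infty]$, the inequality $\limsup_{m\to\infty} c_m/m \leq L$ is immediate from the definition of supremum, so the real task is the matching bound $\liminf_{m\to\infty} c_m/m \geq L$. Fix any $V < L$ and, by definition of supremum, choose $N \geq 1$ with $c_N/N > V$ (so that in particular $A_N > 0$). For any $m \geq N$, write $m = qN + r$ with $q \geq 1$ and $0 \leq r < N$, and iterate superadditivity to get $c_m \geq q\,c_N + c_r$. Dividing by $m$ yields
$$\frac{c_m}{m} \;\geq\; \frac{qN}{m}\cdot\frac{c_N}{N} + \frac{c_r}{m},$$
and as $m \to \infty$ the first coefficient tends to $1$ while the second term tends to $0$ (assuming $c_r$ is finite). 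Hence $\liminf_{m\to\infty} c_m/m \geq c_N/N > V$, and letting $V \uparrow L$ completes the argument. The case $L = +\infty$ is handled in the same way by taking $V$ arbitrarily large.

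The main obstacle is the parenthetical assumption that $c_r$ is finite for every residue $r \in \{0,\dots,N-1\}$, i.e., that $A_r > 0$. If some $A_r$ vanishes, the bound $A_m \geq A_N^q A_r$ collapses to zero and carries no information. To sidestep this, I would exploit the fact that $\mathcal{S} := \{m \geq 1 : A_m > 0\}$ is closed under addition (since $A_{m+n} \geq A_m A_n > 0$ for $m,n \in \mathcal{S}$), so that $\mathcal{S}$ is a numerical semigroup and, once nonempty, contains every sufficiently large multiple of $d := \gcd(\mathcal{S})$. One may therefore replace $N$ by a large multiple $N' \in \mathcal{S}$ (which only increases the ratio, by superadditivity) and shift the window of residues, writing $m = (q-1)N' + (N' + r)$ so that the ``tail'' index $N' + r$ lies in the range where $A_{N'+r}$ is positive; the chain of inequalities then goes through with $c_{N'+r}$ in place of $c_r$. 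The fully degenerate case $\mathcal{S} = \emptyset$, where $A_m = 0$ for all $m \geq 1$, is trivial since then $\sqrt[m]{A_m} \equiv 0$. Once these bookkeeping issues are dispatched, the proof is the standard reduction to the classical additive Fekete lemma.
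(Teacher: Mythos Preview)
The paper does not prove this theorem; it simply cites \cite[Lemma~11.6]{LW01}. Your approach---passing to logarithms and running the standard division-with-remainder argument $m = qN + r$---is exactly the textbook proof, and for sequences with $A_m > 0$ for all $m \geq 1$ (in particular for the paper's application to $P_m$, where $P_1 = 1$) it is complete and correct. You even prove the slightly sharper conclusion $\lim_m \sqrt[m]{A_m} = \sup_{m\geq 1}\sqrt[m]{A_m}$.

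Your treatment of zeros, however, does not fully close. The semigroup trick works only when $d := \gcd(\mathcal{S}) = 1$: then $\mathcal{S}$ is cofinite in $\mathbb{N}$ and the shifted tail index $N' + r$ eventually lands in $\mathcal{S}$ for every residue $r$. But if $d > 1$, every $m \notin d\mathbb{Z}$ has $A_m = 0$, and your shift cannot help since $N' + r \equiv r \pmod{d}$. In fact the theorem as stated is simply false in this regime: take $A_m = 2^m$ for even $m$ and $A_m = 0$ for odd $m$; this sequence is supermultiplicative, yet $\sqrt[m]{A_m}$ alternates between $0$ and $2$ and has no limit. So the gap lies in the hypothesis, not in your method; adding the assumption $A_1 > 0$ (which forces $d = 1$ and holds in the paper's application) makes your argument go through verbatim.
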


We start by proving the following lemma, which establishes a formal relation between $C(x)$ and $P(x)$, and hence between the numbers $C_m$ and $P_m$.
\begin{lemma}\label{lem:relation}
  For any dimension $d$, the formal equality
    $C(x) = P(xC(x)^d)$ holds.
\end{lemma}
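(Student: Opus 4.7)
The plan is to prove the identity combinatorially by exhibiting a size-preserving bijection between $d$-dimensional balanced bracket expressions of size $m$ and tuples $(p, (u_1, \ldots, u_{dk}))$, where $p$ is a prime balanced bracket expression of some size $k \geq 0$ and each $u_i$ is a (possibly empty) balanced bracket expression with $\sum_i |u_i| = m - k$. Granting such a bijection immediately yields
\begin{equation*}
  C_m \;=\; \sum_{k = 0}^{m} P_k \cdot [x^{m-k}]\, C(x)^{dk} \;=\; [x^m]\, P\!\left(x\, C(x)^d\right)\!,
\end{equation*}
which is the desired identity $C(x) = P(xC(x)^d)$ at the level of formal power series.

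The forward direction of the bijection maps $(p, (u_1, \ldots, u_{dk}))$ to $p[0]\, u_1\, p[1]\, u_2 \cdots p[dk-1]\, u_{dk}$, that is, the $i$-th block $u_i$ is inserted immediately after the $i$-th letter of the skeleton $p$; a direct check on bracket counts and prefix inequalities shows that the result is balanced. The inverse is extracted by a greedy left-to-right scan of a non-empty balanced expression $c$: set $p[0] := c[0]$, take $u_1$ to be the longest (possibly empty) prefix of $c[1:]$ that is itself balanced, let $p[1]$ be the next character of $c$, take $u_2$ to be the longest balanced prefix of what remains, and continue until $c$ is exhausted. Counting bracket occurrences confirms that the procedure consumes exactly all of $c$ and that $|p| = dk$ for some $k \geq 0$, with each $u_i$ balanced by construction.

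The crux is to verify that the extracted skeleton $p$ is always prime and that the decomposition is unique. For primality, suppose $p$ admitted a non-empty proper balanced contiguous substring $p[a:b]$. Since a balanced prefix of a balanced expression forces the complementary suffix to be balanced as well, we may assume $a \geq 1$. A routine check on counts and prefix inequalities then shows that
\begin{equation*}
  u_a \cdot p[a]\, u_{a+1}\, p[a+1] \cdots p[b-1]\, u_b
\end{equation*}
is itself balanced, strictly longer than $u_a$, and is a prefix of the suffix of $c$ that the algorithm was examining when it selected $u_a$, violating greedy maximality. Hence $p$ is prime.

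For uniqueness, suppose $c$ also decomposes as $q[0]\, v_1\, q[1]\, v_2 \cdots q[dk'-1]\, v_{dk'}$ with $q$ prime. Clearly $p[0] = q[0] = c[0]$, and by greedy maximality $|u_1| \geq |v_1|$. The main obstacle is ruling out the strict inequality: writing $u_1 = v_1 \cdot w$ for a non-empty balanced $w$, a short case analysis on where $w$ ends inside $q[1]\, v_2\, q[2]\, v_3 \cdots$ is required. The delicate sub-case is when $w$ ends strictly inside some insertion $v_{1+t+1}$; combining the equal-count condition with the prefix inequalities then forces both that partially consumed insertion and the skeleton slice $q[1:1+t+1]$ to themselves be balanced, yielding a proper balanced contiguous substring of $q$ and contradicting its primality. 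Hence $u_1 = v_1$, and an induction on length completes the uniqueness argument and, with it, the proof of the lemma.
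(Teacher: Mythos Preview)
Your proof is correct and follows essentially the same approach as the paper: both establish the identity by factoring an arbitrary balanced bracket expression into a prime ``skeleton'' $p$ with balanced sub-expressions inserted after each of its $d|p|$ letters. Your greedy left-to-right scan extracts precisely the inclusion-maximal balanced substrings (starting after the first letter) that the paper uses, and you are simply more explicit than the paper about why the resulting skeleton is prime and why the decomposition is unique.
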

\begin{proof}
  Let $c$ be a balanced bracket expression of dimension $d$.
  Consider now all inclusion-maximal and contiguous substrings of $c$ which are themselves balanced bracket expressions (of dimension $d$) and which start someplace after the first letter of $c$.
  Call these substrings $c_1,c_2,\dots,c_k$ and note that some of them might be empty (see Figure \ref{fig:factorization} for an example for $d=3$).
  In fact, by definition, we have $|c_1| = |c_2| = \dots = |c_k| = 0$ if and only if $c$ is prime.
  
  \begin{figure}
    \begin{center}
      \begin{tikzpicture}
  \clip (-4.45,0.3) rectangle (4.5,-1.96);
  \begin{scope}[yshift=0]
    \node{
      $c = \langle\,\hls{$\langle|\rangle$}\,|\langle\,\hls{$\langle|\rangle$}\,\rangle|\,\hls{$\langle|\langle|\rangle\rangle$}\,\rangle\,\hls{$\langle|\rangle$}$
    };
  \end{scope}
  \begin{scope}[yshift=-39]
    \node{
      $
      \begin{aligned}
        p   &= \langle|\langle\rangle|\rangle &&&
        c_1 &= \hls{$\langle|\rangle$} &&&
        c_2 &= \varepsilon &&&
        c_3 &= \hls{$\langle|\rangle$}  \\
        |p| &= 2 &&&
        c_4 &= \varepsilon &&&
        c_5 &= \hls{$\langle|\langle|\rangle\rangle$} &&&
        c_6 &= \hls{$\langle|\rangle$}
      \end{aligned}
      $
    };
  \end{scope}
\end{tikzpicture}
    \end{center}
    \caption{A balanced bracket expression $c$, and the corresponding factorization consisting of $p$ and $c_1,\dots,c_6$.}
    \label{fig:factorization}
  \end{figure}
  
  Clearly, for $i \neq j$, $c_i$ and $c_j$ cannot be adjacent in $c$ since they are inclusion-maximal by assumption.
  Nor does $c_i$ contain $c_j$ or vice versa.
  Nor do they overlap because if that were the case, both their intersection and their union would be balanced bracket expressions, again contradicting inclusion-maximality.
  
  Therefore, after removing $c_1,\dots,c_k$ from $c$, we obtain a balanced bracket expression $p$ that is prime, whose size satisfies $d \cdot |p| = k$, and which yields back $c$ if $c_1,\dots,c_k$ are plugged back into the $k$ gaps in $p$ in the appropriate order (we ignore the ``gap'' before the first letter in $p$).
  Loosely speaking, the ordered collection consisting of $p$ and $c_1,\dots,c_k$ can be seen as a unique factorization of $c$.
  Further note that $|c| = |p| + |c_1| + \dots + |c_k|$.
  
  In the sums below, by letting the variables $c$ and $c_1,\dots,c_k$ run over all balanced bracket expressions of dimension $d$, and by letting $p$ run over all expressions that are prime, we now see that, indeed,
  \begin{align*}
    C(x) & = \sum_c x^{|c|}
         = \sum_p \sum_{\substack{c_1,\dots,c_k \\ k = d \cdot |p|}} x^{|p| + |c_1| + \dots + |c_k|} \\
         & = \sum_p x^{|p|}
             \sum_{\mathclap{\substack{c_1,\dots,c_k \\ k = d \cdot |p|}}} x^{|c_1| + \dots + |c_k|}
         = \sum_p x^{|p|} C(x)^{d \cdot |p|}
         = P(xC(x)^d) \text{.} \qedhere
  \end{align*}
\end{proof}

By combining Lemma \ref{lem:relation} with the Lagrange inversion formula, we obtain an efficient method for computing prime Catalan numbers of any dimension.
\begin{lemma}\label{lem:explicit}
  For any dimension $d$, we have
    $\displaystyle P_m = \frac{1}{1-dm} \cdot [x^m] \frac{1}{C(x)^{dm-1}}$.
\end{lemma}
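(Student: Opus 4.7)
\textbf{Proof plan for Lemma \ref{lem:explicit}.}

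The plan is to interpret Lemma \ref{lem:relation} as inviting a direct application of Lagrange inversion (Theorem \ref{thm:lagrange}), and then to massage the resulting coefficient expression into the stated form. The identity $C(x) = P(xC(x)^d)$ says that if we let $y = xC(x)^d$, then $P(y) = C(x(y))$, where $x(y)$ is the formal compositional inverse of $x \mapsto xC(x)^d$. Since $C(0) = C_0 = 1 \neq 0$, the power series $A(x) := 1/C(x)^d$ is well-defined with nonzero constant term, and one has $Z(x) := x/A(x) = xC(x)^d$. Thus Theorem \ref{thm:lagrange} applies and yields a unique inverse $X(z)$ with $X(z) = x(z)$, so that $P(z) = C(X(z))$ as formal power series.

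Next I would extract the $m$-th coefficient from this identity. For $m \geq 1$, writing $C(X(z)) = \sum_{k \geq 1} C_k X(z)^k$ (the $k = 0$ term contributes nothing to $[z^m]$), the second formula in Theorem \ref{thm:lagrange} gives
\begin{equation*}
  P_m = \sum_{k=1}^{m} C_k \cdot \frac{k}{m}\,[x^{m-k}] A(x)^m
      = \frac{1}{m}\,[x^m]\!\left(\sum_{k \geq 1} k C_k x^k\right) \frac{1}{C(x)^{dm}}
      = \frac{1}{m}\,[x^{m-1}] \frac{C'(x)}{C(x)^{dm}},
\end{equation*}
where in the second step I used $A(x)^m = 1/C(x)^{dm}$ and recognized $\sum_k k C_k x^k = x C'(x)$.

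Finally, I would transform this into the stated form by a simple derivative rewrite. Observe that
\begin{equation*}
  \frac{d}{dx} \frac{1}{C(x)^{dm-1}} = -(dm-1) \frac{C'(x)}{C(x)^{dm}},
  \quad\text{so}\quad
  \frac{C'(x)}{C(x)^{dm}} = \frac{1}{1-dm} \cdot \frac{d}{dx}\frac{1}{C(x)^{dm-1}}.
\end{equation*}
Combined with the identity $[x^{m-1}] f'(x) = m\,[x^m] f(x)$ applied to $f(x) = 1/C(x)^{dm-1}$, this yields
\begin{equation*}
  P_m = \frac{1}{m(1-dm)} \cdot m \cdot [x^m] \frac{1}{C(x)^{dm-1}} = \frac{1}{1-dm}\,[x^m] \frac{1}{C(x)^{dm-1}},
\end{equation*}
which is the claim. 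The case $m = 0$ is immediate from $P_0 = C_0 = 1$ and a direct check. The only delicate step is the bookkeeping in the Lagrange application (choosing $A = 1/C^d$ rather than $C^d$ so that $Z(x) = xC(x)^d$ matches the substitution in Lemma \ref{lem:relation}); once that is in place the rest is elementary power-series manipulation, so I do not expect any serious obstacle.
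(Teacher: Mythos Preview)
Your proposal is correct and follows essentially the same route as the paper: set $A(x)=1/C(x)^d$ so that $Z(x)=xC(x)^d$, apply Lagrange inversion to write $P(z)=C(X(z))$, extract coefficients via the formula for $[z^m]X(z)^k$, and then use the derivative identity $\big(C(x)^{1-dm}\big)'=(1-dm)\,C'(x)/C(x)^{dm}$ together with $[x^{m-1}]f'(x)=m[x^m]f(x)$ to arrive at the stated form. The bookkeeping and the handling of the $m=0$ case match the paper's argument exactly.
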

\begin{proof}
  Define the formal power series
  \begin{align*}
    A(x) := \frac{1}{C(x)^d}\text{,} && Z(x) := \frac{x}{A(x)} = xC(x)^d \text{.}
  \end{align*}
  By Theorem \ref{thm:lagrange}, there exists $X(z)$ with $Z(X(z)) = z$.
  Hence, substituting $X(z)$ for $x$ in Lemma \ref{lem:relation} yields $C(X(z)) = P(z)$.
  Observe now that for $m=0$ the lemma holds because we have $P_0 = C_0 = 1$.
  For $m > 0$, by using the formula from Theorem \ref{thm:lagrange} in the fourth step,
  \begin{align*}
    P_m & = [z^m] P(z) = [z^m] C(X(z))
          = \sum_{k=0}^\infty C_k \cdot [z^m] X(z)^k \\
        & = \sum_{k=0}^\infty C_k \cdot \frac{k}{m} [x^{m-k}] A(x)^n
          = \frac{1}{m} \cdot [x^m] A(x)^m \cdot \underbrace{\sum_{k=0}^\infty k C_k x^k}_{= \, xC'(x)} \\
        & = \frac{1}{m} \cdot [x^{m-1}] \frac{C'(x)}{C(x)^{dm}}
          = \frac{1}{m} \cdot [x^{m-1}] \left(\frac{1}{1-dm} \frac{1}{C(x)^{dm-1}}\right)' \\
        & = \frac{1}{1-dm} \cdot [x^m] \frac{1}{C(x)^{dm-1}} \text{.} \qedhere
  \end{align*}
\end{proof}

In the introduction we observed that for $d=2$, the prime Catalan numbers do not give us a particularly exciting sequence.
For higher dimensions the situation is very different, as shown by the next lemma.

\begin{lemma}\label{lem:supermulti}
  For any dimension $d \geq 3$, the prime Catalan numbers are super-multiplicative, i.e., $P_{m+n} \geq P_m \cdot P_n$ for all $m$ and $n$.
\end{lemma}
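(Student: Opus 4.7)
The plan is to exhibit an explicit injection $\phi$ that sends a pair $(p, q)$ of prime balanced bracket expressions of sizes $m, n$ to a prime balanced bracket expression of size $m + n$, from which $P_{m+n} \geq P_m P_n$ follows immediately. For $m = 0$ or $n = 0$ the claim is trivial, so assume $m, n \geq 1$. Since every non-empty prime balanced bracket expression starts with $b_1$ and ends with $b_d$, I may write $p = b_1 \alpha b_d$ and $q = b_1 \beta b_d$, and define
\[
\phi(p, q) := b_1\, b_1\, \alpha\, \beta\, b_d\, b_d.
\]

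Three things need to be checked: that $\phi(p, q)$ is a balanced bracket expression of size $m + n$, that $\phi$ is injective, and that $\phi(p, q)$ is prime. The first is a direct letter count combined with the fact that the required prefix condition is inherited from the balance of $p$ and $q$ (the initial $b_1 b_1$ only reinforces the $b_1$-dominance, and at the $\alpha$-$\beta$ transition the profile has $b_1$ strictly leading, which absorbs any extra $b_2$ that $\beta$ may put first). Injectivity is immediate: the output begins with $b_1 b_1$ and ends with $b_d b_d$, so $\alpha$ is the block of length $dm - 2$ after the initial $b_1 b_1$ and $\beta$ is the block of length $dn - 2$ before the trailing $b_d b_d$, which recovers $(p, q)$.

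For primality, suppose for contradiction that $r := \phi(p, q)$ has a proper balanced bracket substring $r[j_1..j_2)$. I would split into three cases according to the position of $[j_1, j_2)$ relative to the interface $dm$. If $j_2 \leq dm$, the substring lies inside $b_1 b_1 \alpha$ and a quick sub-case analysis on $j_1$ extracts a proper balanced substring of $p$, contradicting $p$'s primality. If $j_1 \geq dm$, the mirror involution (reverse $r$ and swap $b_i \leftrightarrow b_{d+1-i}$) reduces to the previous case with $q$ in place of $p$. The main obstacle is the straddling case $j_1 < dm < j_2$. Parametrising $j_1 = 2 + k_1$ and $j_2 = dm + k_2$, and demanding the substring's prefix condition at the internal position $\ell = dm - j_1$, forces the prefix-count vector $A_{k_1}$ of $\alpha$ at $k_1$ into exactly one of two shapes: either $(a, a+1, \ldots, a+1, a)$ or $(a, a+1, \ldots, a+1, a+1)$ for some integer $a \geq 0$. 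In the second shape, $p[0..1 + k_1)$ has balanced counts $(a+1, \ldots, a+1)$ and inherits $p$'s prefix condition, so it is a proper balanced substring of $p$---contradicting $p$'s primality. In the first shape, combining the count-balance equations with the substring's prefix condition throughout the $\beta$-portion forces $\beta[0..k_2)$ to have balanced counts and to satisfy its own prefix condition, so it is a proper balanced substring of $q$---contradicting $q$'s primality. The boundary sub-cases $j_1 \in \{0, 1\}$ and $j_2 \in \{dm + dn - 1, d(m + n)\}$ are dispatched by direct computation.

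This last case is exactly where the hypothesis $d \geq 3$ is indispensable: the ``middle'' coordinates $b_2, \ldots, b_{d-1}$ are what force the restrictive shape constraint on $A_{k_1}$ and what produce the inconsistencies that dispose of the boundary sub-cases. For $d = 2$ the middle is empty, both mechanisms collapse, and $\phi$ indeed fails to produce prime expressions---consistent with $P^{(2)}_m = 0$ for $m \geq 2$.
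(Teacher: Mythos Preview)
Your injection $\phi(p,q)=b_1b_1\alpha\beta b_db_d$ is correct and the case analysis goes through, but it differs from the paper's construction. The paper writes $p=p'\,b_d$ and $q=b_1\,q'$ and sets $\psi(p,q)=p'\,b_1b_d\,q'$, i.e.\ it swaps the two letters at the junction of the concatenation $pq$ rather than wrapping the pair with an extra $b_1\cdots b_d$. The primality proof for $\psi$ is shorter: any balanced substring $c$ of $\psi(p,q)$ must contain the central $b_1b_d$ (else it lies inside $p'$ or $q'$, contradicting primality of $p$ or $q$), and then a brief count argument shows that the portions of $c$ on either side, re-completed with the swapped letter, are themselves balanced postfixes/prefixes of $p$ and $q$, hence equal to $p$ and $q$. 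Your route trades this one-line reduction for a more mechanical case split at the $\alpha$--$\beta$ interface; one small simplification is available in your ``first shape'' case, where $\alpha[k_1..dm{-}2)$ already has all letter counts equal to $m-1-a$ and inherits the prefix condition from $s$, so it is itself a proper balanced substring of $p$---you can contradict $p$'s primality directly without detouring through $\beta$.
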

\begin{proof}
  Fix $m$ and $n$. Consider the two sets of sizes $P_m$ and $P_n$ containing all prime balanced bracket expressions of size $m$ and $n$, respectively.
  By combining each pair of such expressions in a certain way, we will show how to obtain $P_m \cdot P_n$ distinct prime balanced bracket expressions of size $m+n$.
  
  Let $p_m$ and $p_n$ be two arbitrary prime balanced bracket expressions of respective sizes $m$ and $n$.
  We may assume that $m$ and $n$ are non-zero.
  Hence, we can assume that the expressions are of the form $p_m = p_m' \rangle$ and $p_n = \langle p_n'$, where $p_m'$ and $p_n'$ are the prefix and postfix, respectively, of $p_n$ and $p_m$, containing all but one letter.
  Here, we use the brackets $b_1 = \mlq\langle\mrq$ and $b_d = \mlq\rangle\mrq$, while leaving the remaining $d-2$ brackets unspecified.
  The expression corresponding to the pair $(p_m,p_n)$ is now defined as $p = p_m' \langle\rangle p_n'$.
  Clearly, in this way we obtain $P_m \cdot P_n$ distinct balanced bracket expressions of size $m+n$.
  It only remains to show that $p$ is prime.
  
  Consider thus a substring $c$ of $p$ that is a balanced bracket expression (of dimension $d$).
  Since by assumption $p_m$ and $p_n$ do not contain any such substrings, $c$ must contain the central ``$\langle\rangle$'' between $p_m'$ and $p_n'$ in $p$. 
  Fittingly, we write $c = c_m' \langle\rangle c_n'$, where $c_m'$ and $c_n'$ are a postfix and prefix, respectively, of $p_m'$ and $p_n'$.
  Furthermore, let $c_m = c_m'\rangle$ and $c_n = \langle c_n'$.
  The fact that $c_m'$ and $c_m$ are, respectively, a prefix and a postfix of a balanced bracket expression, easily implies that $c_m$ is a balanced bracket expression.
  By a symmetric argument, $c_n$ is also a balanced bracket expression.
  Since $p_m$ and $p_n$ are prime, it follows that $c_m = p_m$ and $c_n = p_n$, and thus $c = p$.
\end{proof}

Theorem \ref{thm:primeasym} from the introduction is a consequence of Lemma \ref{lem:supermulti}, Theorem \ref{thm:fekete}, and the fact that the radius of convergence of the formal power series $P(x)$ is equal to $1/\gamma_d$.
The latter is not hard to prove by using Lemma \ref{lem:relation} and by using that the radius of convergence of $C(x)$ is equal to $1/d^d$.

{
\renewcommand{\thetheorem}{\ref{thm:primeasym}}
\begin{theorem}
  For any dimension $d \geq 3$, the prime Catalan numbers satisfy
  \begin{equation*}
    \lim_{m \rightarrow \infty} \sqrt[m]{P_m} = \gamma_d\text{, } \; \text{ where } \gamma_d := \left(\frac{d}{C(1/d^d)}\right)^d\text{.} 
  \end{equation*}
\end{theorem}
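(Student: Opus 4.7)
The plan is to combine the three ingredients already assembled in the section: super-multiplicativity (Lemma~\ref{lem:supermulti}), Fekete's lemma (Theorem~\ref{thm:fekete}), and the functional equation $C(x)=P(xC(x)^d)$ (Lemma~\ref{lem:relation}). First, Lemma~\ref{lem:supermulti} provides $P_{m+n}\ge P_m P_n$, so Theorem~\ref{thm:fekete} applied to $A_m=P_m$ guarantees that $L:=\lim_{m\to\infty}\sqrt[m]{P_m}$ exists in $[0,\infty]$. Because the coefficients $P_m$ are non-negative, the Cauchy--Hadamard theorem gives $L=1/R_P$, where $R_P$ denotes the radius of convergence of $P(x)=\sum_m P_m x^m$. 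The task therefore reduces to proving $R_P=1/\gamma_d$.

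For the inequality $R_P\ge 1/\gamma_d$, I would first record that the radius of convergence of $C(x)$ equals $1/d^d$ and that $C(1/d^d)<\infty$, both of which follow from the generalization of the asymptotic~(\ref{eq:catalan_three_asympt}) via the hook-length formula for $(m^d)$-shaped Young tableaux; for $d\ge 3$ the polynomial factor $m^{-\alpha_d}$ has exponent $\alpha_d\ge 4$, which secures summability at $x=1/d^d$. Substituting $x=1/d^d$ into Lemma~\ref{lem:relation} then yields
\[
  C(1/d^d)\;=\;P\!\bigl((1/d^d)\,C(1/d^d)^d\bigr)\;=\;P(1/\gamma_d),
\]
since $(1/d^d)\,C(1/d^d)^d=1/\gamma_d$ by definition of $\gamma_d$. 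Finiteness of the left-hand side forces $R_P\ge 1/\gamma_d$.

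For the converse inequality $R_P\le 1/\gamma_d$, I would argue by contradiction. Suppose $R_P>1/\gamma_d$. Then $P$ is analytic in a disc around $y_0:=1/\gamma_d$, and $P(y_0)=C(1/d^d)>0$. The functional equation can be inverted locally as $x=\psi(y):=y/P(y)^d$, which is analytic near $y_0$ and satisfies $\psi(y_0)=1/d^d$. If $\psi'(y_0)\ne 0$, the analytic inverse function theorem provides an analytic local inverse, so $C(x)=P(\psi^{-1}(x))$ would be analytic at $x=1/d^d$, contradicting that $C$ has a singularity there (Pringsheim's theorem applied to the non-negative series $C(x)$ with radius of convergence $1/d^d$). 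If, instead, $\psi$ has a critical point of order $k\ge 2$ at $y_0$, then $\psi^{-1}$ has a local branch of type $(x-1/d^d)^{1/k}$, forcing $C$ to exhibit a $1/k$-order algebraic singularity at $1/d^d$; standard transfer theorems then give $C_m=\Theta(m^{-1-1/k}(d^d)^m)$, contradicting $\alpha_d\ge 4$. Either way, $R_P>1/\gamma_d$ is impossible.

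The main obstacle is clearly this upper-bound direction on $R_P$: ruling out a degenerate critical point of $\psi$ at $1/\gamma_d$ requires knowing the \emph{type} of the singularity of $C$ at $1/d^d$, not just its location. In a clean write-up, this part would likely be streamlined either via direct singularity analysis of Lemma~\ref{lem:explicit} (applying Lagrange--B\"urmann-style estimates to $[x^m] C(x)^{-(dm-1)}$), or by exploiting the fact that the exponent $\alpha_d$ arising from the hook-length formula is strictly greater than any value reachable by a composition with a locally analytic map, as indicated above.
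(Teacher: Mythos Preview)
Your overall strategy matches the paper's: super-multiplicativity plus Fekete reduce the problem to showing $R_P=1/\gamma_d$, and Lemma~\ref{lem:relation} does the rest. Your lower bound $R_P\ge 1/\gamma_d$ is essentially the paper's argument; the paper works strictly inside the disc (for each $y<1/\gamma_d$ it finds $x<1/d^d$ with $y=xC(x)^d$ by monotonicity) rather than evaluating at the boundary, but this is cosmetic.

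For the upper bound $R_P\le 1/\gamma_d$ you take a genuinely different route. The paper differentiates the identity $C(x)=P(F(x))$ explicitly $k=\lceil (d^2-1)/2-1\rceil$ times, isolates the term involving $C^{(k)}(x)$, and shows that as $x\to (1/d^d)^-$ this forces $P^{(k)}(F(x))\to\infty$; this is pure real calculus, using only which derivatives of $C$ remain finite at $1/d^d$. Your inverse-function-theorem argument via $\psi(y)=y/P(y)^d$ and Pringsheim is more conceptual, and Case~1 is clean. Your Case~2, however, is both unnecessary and slightly gappy: invoking transfer theorems for $C_m=\Theta(m^{-1-1/k}(d^d)^m)$ needs $C$ to be $\Delta$-analytic, which local Puiseux behaviour at the single point $1/d^d$ does not by itself guarantee. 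The cheap repair is to observe that a $(x-x_0)^{1/k}$ branch forces $C'(x)\to\infty$ as $x\to x_0^-$, contradicting $C'(1/d^d)=\sum mC_m d^{-dm}<\infty$ (the exponent $(d^2-1)/2\ge 4$ makes this sum converge). Better still, Case~2 is vacuous: differentiating $\psi(F(x))=x$ gives $\psi'(y_0)=1/F'(x_0)$, and $F'(x_0)=C(x_0)^d+dx_0C(x_0)^{d-1}C'(x_0)$ is finite and positive for $d\ge 3$, so $\psi'(y_0)\ne 0$ automatically and only your Case~1 is needed. With that observation your argument becomes shorter than the paper's and avoids its somewhat delicate bookkeeping of higher derivatives.
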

}
\begin{proof}
  For any fixed dimension $d \geq 3$, let $R_C$ and $R_P$ be the radii of convergence of the power series $C(x)$ and $P(y)$, respectively.
  From the hook-length formula and Stirlings's approximation (see equations (\ref{eq:catalan_three_exact}) and (\ref{eq:catalan_three_asympt}) for the 3-dimensional case), it follows that
  \begin{equation}\label{eq:catalan_d_asympt}
    C_m \sim \frac{1! \cdot 2! \cdots (d-1)! \cdot \sqrt{d}}{\sqrt{2\pi}^{d-1}} m^{-(d^2-1)/2}d^{dm} \text{ \; (as $m \rightarrow \infty$),}
  \end{equation}
  and hence, by elementary analysis,
  \begin{equation}\label{eq:catalan_d_radius}
    R_C = 1/d^d\text{.}
  \end{equation}
  
  From equations (\ref{eq:catalan_d_asympt}) and (\ref{eq:catalan_d_radius}) we will deduce that $R_P = 1 / \gamma_d$.
  This will conclude the proof of the theorem because of Lemma \ref{lem:supermulti} and Theorem \ref{thm:fekete}.
  
  First, we show that $R_P \geq 1/\gamma_d$.
  Note that for positive $x < R_C$, the function $C(x)$ is continuous and strictly increasing, since all coefficients are positive.
  It follows that for every positive $y < 1/\gamma_d = R_C C(R_C)^d$ there exists a (unique) positive $x < R_C$ with $y = xC(x)^d$ and hence, with the help of Lemma \ref{lem:relation},
  \begin{equation*}
    P(y) = P(xC(x)^d) = C(x) < \infty\text{.}
  \end{equation*}
  
  Second, we show that $R_P \leq 1/\gamma_d$.
  Since the radius of convergence does not change under differentiation, it is sufficient to prove that the formal derivative of $P(y)$ of a certain order diverges at $y = 1/\gamma_d$.
  For that, we will use the following elementary observations.
  \begin{itemize}
    \item The $k$-th derivative\footnote{Be wary of the clash of notation here.} $C^{(k)}(R_C)$ remains convergent for $k < \frac{d^2-1}{2}-1$, but diverges for all $k \geq \frac{d^2-1}{2}-1$ (this follows from (\ref{eq:catalan_d_asympt}) by a comparison with hyperharmonic series).
    \item We have $C(x) > 0$ for all positive $x \leq R_C$ (simply because all coefficients are positive).
  \end{itemize}
  Let now $F(x) = xC(x)^d$, and consider the first derivative
  \begin{equation}\label{eq:derivative_f}
    F'(x) = dxC(x)^{d-1} \cdot C'(x) + C(x)^d\text{,}
  \end{equation}
  as well as the $k$-th derivative
  \begin{equation}\label{eq:derivative_f_k}
    F^{(k)}(x) = dxC(x)^{d-1} \cdot C^{(k)}(x) + \dots\text{,}
  \end{equation}
  where we have omitted all additive terms that contain only lower-order derivatives of $C(x)$.
  
  Starting from the equality given by Lemma \ref{lem:relation}, we similarly get
  \begin{equation}
    P'(F(x)) = \frac{C'(x)}{F'(x)}\text{,}
  \end{equation}
  as well as
  \begin{equation}\label{eq:derivative_p_k}
    P^{(k)}(F(x)) = \frac{C^{(k)}(x)}{F'(x)^k} - \frac{F^{(k)}(x)C'(x)}{F'(x)^{k+1}} + \dots\text{,}
  \end{equation}
  where again we have omitted additive terms that contain only lower-order derivatives of $C(x)$ and $F(x)$.
  By combining equation (\ref{eq:derivative_p_k}) with equations (\ref{eq:derivative_f_k}) and (\ref{eq:derivative_f}) we obtain the following.
  \begin{align}
    P^{(k)}(F(x))
  &= \frac{C^{(k)}(x)}{F'(x)^k} \Bigg(1 - \frac{dxC(x)^{d-1}C'(x)}{F'(x)}\Bigg) + \dots \\
  &= C^{(k)}(x) \cdot \frac{C(x)^d}{F'(x)^{k+1}} + \dots \label{eq:derivative_p_k_two}
  \end{align}
  Using our observations, for $k = \lceil\frac{d^2-1}{2}-1\rceil \geq 2$, as $x$ approaches $R_C$ from below, the right hand side of equation (\ref{eq:derivative_p_k_two}) diverges because $C^{(k)}(x)$ tends to infinity and because all omitted additive terms are bounded.
\end{proof}

As expected, the argument in the proof of Theorem~\ref{thm:primeasym} breaks down for the case $d=2$. Indeed, for $d=2$ we get $k = 1$, which means $F'(x)$ is no longer bounded and we cannot conclude that the right hand side of equation (\ref{eq:derivative_p_k_two}) tends to infinity. In fact, it does not diverge, since $P(y) = 1 + y$ and $P'(y) = 1$ for $d=2$.

\section{Acknowledgements}

None of the presented results would have been obtained without the help of the online encyclopedia of integer sequences \cite{OEIS}, which gave the crucial hint by recognizing the 3-dimensional Catalan numbers.
The author further would like to thank Emo Welzl, Jerri Nummenpalo, and Malte Milatz for interesting discussions on the subject.


\bibliography{bibliography}


\appendix
\section{Experiments}

In Tables \ref{tab:data_three}, \ref{tab:data_four}, \ref{tab:data_five} we present some experimental evidence for the asymptotic growth rate of the prime Catalan numbers of dimensions $d = 3,4,5$.

For $d = 3$ the corresponding approximations are defined as
\begin{align*}
  \tilde{C}^{(3)}_m := m^{-4} 3^{3m} && \tilde{P}^{(3)}_m := m^{-4} \gamma_3^m \text{,\;\;\; where $\gamma_3 = \frac{27}{(\frac{729\sqrt{3}}{40\pi} - 9)^3}$} \text{.}
\end{align*}

Note that, as can be expected from equation (\ref{eq:catalan_three_asympt}), the ratio $C^{(3)}_m / \tilde{C}^{(3)}_m$ approaches $\frac{\sqrt{3}}{\pi} \approx 0.55132$ as $m$ grows larger.
Also the ratio $P^{(3)}_m / \tilde{P}^{(3)}_m$ seems to converge, but we do not know the limit.

Similarly, for $d = 4$ we define
\begin{align*}
  \tilde{C}^{(4)}_m := m^{-7.5} 4^{4m} && \tilde{P}^{(4)}_m := m^{-7.5} \gamma_4^m \text{,\;\;\; where $\gamma_4 \approx 251.78874$} \text{.}
\end{align*}

Finally, for $d = 5$ we define
\begin{align*}
  \tilde{C}^{(5)}_m := m^{-12} 5^{5m} && \tilde{P}^{(5)}_m := m^{-12} \gamma_5^m \text{,\;\;\; where $\gamma_5 \approx 3119.93434$} \text{.}
\end{align*}

\begin{sidewaystable}
\begin{flushleft}
\begin{tabular}{rcccccc}
  \toprule
    $m$ & $C^{(3)}_m$ & $\tilde{C}^{(3)}_m$ & $C^{(3)}_m / \tilde{C}^{(3)}_m$ &
          $P^{(3)}_m$ & $\tilde{P}^{(3)}_m$ & $P^{(3)}_m / \tilde{P}^{(3)}_m$ \\
  \midrule
    1    & 1.00000e+00000 & 2.70000e+00001 & 0.03703 & 1.00000e+00000 & 2.34594e+00001 & 0.04262 \\
    2    & 5.00000e+00000 & 4.55625e+00001 & 0.10973 & 2.00000e+00000 & 3.43966e+00001 & 0.05814 \\
    4    & 4.62000e+00002 & 2.07594e+00003 & 0.22254 & 1.07000e+00002 & 1.18313e+00003 & 0.09043 \\
    8    & 2.33716e+00007 & 6.89525e+00007 & 0.33895 & 3.00297e+00006 & 2.23968e+00007 & 0.13408 \\
    16   & 5.21086e+00017 & 1.21713e+00018 & 0.42812 & 2.30416e+00016 & 1.28414e+00017 & 0.17943 \\
    32   & 2.94021e+00039 & 6.06792e+00039 & 0.48454 & 1.46103e+00037 & 6.75438e+00037 & 0.21630 \\
    64   & 1.24633e+00084 & 2.41302e+00084 & 0.51650 & 7.19612e+00079 & 2.98986e+00080 & 0.24068 \\
    128  & 3.25751e+00174 & 6.10550e+00174 & 0.53353 & 2.38674e+00166 & 9.37353e+00166 & 0.25462 \\
    256  & 3.39180e+00356 & 6.25408e+00356 & 0.54233 & 3.86180e+00340 & 1.47410e+00341 & 0.26197 \\
    512  & 5.74118e+00721 & 1.04994e+00722 & 0.54680 & 1.54989e+00690 & 5.83302e+00690 & 0.26571 \\
    1024 & 2.59965e+01453 & 4.73472e+01453 & 0.54906 & 3.91023e+01390 & 1.46132e+01391 & 0.26758 \\
    2048 & 8.47588e+02917 & 1.54052e+02918 & 0.55019 & 3.94041e+02792 & 1.46749e+02793 & 0.26851 \\
    4096 & 1.43714e+05848 & 2.60938e+05848 & 0.55076 & 6.36897e+05597 & 2.36783e+05598 & 0.26897 \\
    8192 & 6.60059e+11709 & 1.19783e+11710 & 0.55104 & 2.65530e+11209 & 9.86334e+11209 & 0.26920 \\
    16384& 2.22603e+23434 & 4.03861e+23434 & 0.55118 & 7.37501e+22433 & 2.73834e+22434 & 0.26932 \\
  \bottomrule
\end{tabular}
\end{flushleft}
\caption{Experimental data for 3-dimensional (prime) Catalan numbers.}\label{tab:data_three}
\end{sidewaystable}

\begin{sidewaystable}
\begin{flushleft}
\begin{tabular}{rcccccc}
  \toprule
    $m$ & $C^{(4)}_m$ & $\tilde{C}^{(4)}_m$ & $C^{(4)}_m / \tilde{C}^{(4)}_m$ &
          $P^{(4)}_m$ & $\tilde{P}^{(4)}_m$ & $P^{(4)}_m / \tilde{P}^{(4)}_m$ \\
  \midrule
    1    & 1.00000e+0000 & 2.56000e+0002 & 0.00390 & 1.00000e+0000 & 2.51788e+0002 &  0.00397 \\
    2    & 1.40000e+0001 & 3.62038e+0002 & 0.03866 & 1.00000e+0001 & 3.50225e+0002 &  0.02855 \\
    4    & 2.40240e+0004 & 1.31072e+0005 & 0.18328 & 1.67640e+0004 & 1.22657e+0005 &  0.13667 \\
    8    & 1.48987e+0012 & 3.10988e+0012 & 0.47907 & 1.05311e+0012 & 2.72342e+0012 &  0.38668 \\
    16   & 2.62708e+0029 & 3.16912e+0029 & 0.82896 & 1.70499e+0029 & 2.43041e+0029 &  0.70152 \\
    32   & 6.63875e+0065 & 5.95736e+0065 & 1.11437 & 3.36922e+0065 & 3.50378e+0065 &  0.96159 \\
    64   & 4.95456e+0140 & 3.81072e+0140 & 1.30016 & 1.49107e+0140 & 1.31817e+0140 &  1.13116 \\
    128  & 3.97058e+0292 & 2.82254e+0292 & 1.40674 & 4.14916e+0291 & 3.37732e+0291 &  1.22853 \\
    256  & 4.10340e+0598 & 2.80305e+0598 & 1.46390 & 5.14015e+0596 & 4.01325e+0596 &  1.28079 \\
    512  & 7.47391e+1212 & 5.00423e+1212 & 1.49351 & 1.34163e+1209 & 1.02581e+1209 &  1.30788 \\
    1024 & 4.35558e+2443 & 2.88718e+2443 & 1.50859 & 1.60345e+2436 & 1.21320e+2436 &  1.32167 \\
    2048 & 2.63772e+4907 & 1.73969e+4907 & 1.51619 & 4.08128e+4892 & 3.07180e+4892 &  1.32862 \\
  \bottomrule
\end{tabular}
\end{flushleft}
\caption{Experimental data for 4-dimensional (prime) Catalan numbers.}\label{tab:data_four}
\end{sidewaystable}

\begin{sidewaystable}
\begin{flushleft}
\begin{tabular}{rcccccc}
  \toprule
    $m$ & $C^{(5)}_m$ & $\tilde{C}^{(5)}_m$ & $C^{(5)}_m / \tilde{C}^{(5)}_m$ &
          $P^{(5)}_m$ & $\tilde{P}^{(5)}_m$ & $P^{(5)}_m / \tilde{P}^{(5)}_m$ \\
  \midrule
    1    & 1.00000e+0000 & 3.12500e+0003 &  0.00032 & 1.00000e+0000 & 3.11993e+0003 &   0.00032 \\
    2    & 4.20000e+0001 & 2.38418e+0003 &  0.01761 & 3.70000e+0001 & 2.37646e+0003 &   0.01556 \\
    4    & 1.66280e+0006 & 5.68434e+0006 &  0.29252 & 1.53347e+0006 & 5.64757e+0006 &   0.27152 \\
    8    & 2.31471e+0017 & 1.32348e+0017 &  1.74895 & 2.19820e+0017 & 1.30642e+0017 &   1.68261 \\
    16   & 1.46174e+0042 & 2.93873e+0041 &  4.97406 & 1.38606e+0042 & 2.86343e+0041 &   4.84055 \\
    32   & 5.23671e+0094 & 5.93472e+0093 &  8.82384 & 4.85822e+0094 & 5.63449e+0093 &   8.62228 \\
    64   & 1.18277e+0203 & 9.91383e+0201 & 11.93052 & 1.04361e+0203 & 8.93612e+0201 &  11.67859 \\
    128  & 1.57847e+0423 & 1.13313e+0422 & 13.93012 & 1.25644e+0423 & 9.20658e+0421 &  13.64719 \\
    256  & 9.13693e+0866 & 6.06353e+0865 & 15.06864 & 5.91142e+0866 & 4.00273e+0865 &  14.76846 \\
    512  & 1.11487e+1758 & 7.11166e+1756 & 15.67671 & 4.76248e+1757 & 3.09908e+1756 &  15.36741 \\
    1024 & 6.40765e+3543 & 4.00703e+3542 & 15.99101 & 1.19291e+3543 & 7.60931e+3541 &  15.67703 \\
    2048 & 8.41548e+7118 & 5.21056e+7117 & 16.15081 & 2.97532e+7117 & 1.87901e+7116 &  15.83445 \\
  \bottomrule
\end{tabular}
\end{flushleft}
\caption{Experimental data for 5-dimensional (prime) Catalan numbers.}\label{tab:data_five}
\end{sidewaystable}

\end{document}